\newtheorem{Theorem}{Theorem}[section]
\newtheorem{lem}[Theorem]{Lemma}
\newtheorem{Remark}[Theorem]{Remark}
\newtheorem{Definition}[Theorem]{Definition}
\newtheorem{Corollary}[Theorem]{Corollary}
\newtheorem{Proposition}[Theorem]{Proposition}
\newtheorem{Example}[Theorem]{Example}
\numberwithin{equation}{section}
\begin{document}

\title{Constacyclic Codes over $F_p+vF_p$ \footnote{
 E-Mail addresses: zgh09@yahoo.com.cn (G. Zhang);
b\_c\_chen@yahoo.com.cn (B.  Chen).}}

\author{Guanghui Zhang$^1$, Bocong Chen$^2$}

\date{\small 1. Department of Mathematics,
Luoyang Normal University,\\
Luoyang, Henan, 471022, China\\
2. School of Mathematics and Statistics,
Central China Normal University,\\
Wuhan, Hubei, 430079, China}

\maketitle

\begin{abstract}
In this paper, we study constacyclic codes over $F_p+vF_p$, where $p$ is an odd prime and $v^2=v$.
The polynomial generators of all  constacyclic codes over $F_p+vF_p$
are characterized and their dual codes are also determined.

\medskip
\textbf{Keywords:} Constacyclic code, polynomial generator, generating set in standard form.

\medskip
\textbf{2010 Mathematics Subject Classification:}~ 94B05; 94B15
\end{abstract}
\section{Introduction}
Since the  discovery that many good non-linear codes
over finite fields are actually closely related to linear codes over $\mathbb{Z}_4$
via the Gray map (see \cite{HK}), codes over finite rings have received a great deal of
attention (e.g. see \cite{AS}-\cite{GG}, \cite{ZK}).

In these studies,  most of them are concentrated on the case that the ground
rings associated with codes are finite chain rings. However, it turns out that optimal codes over non-chain rings exist.
In  \cite{YK}, Yildiz and Karadeniz   considered  linear codes over the
ring $R_1=F_2+uF_2+vF_2+uvF_2$ with $u^2=v^2=0$ and $uv=vu;$
some good binary codes  were obtained as the images of cyclic codes over $R_1$ under two Gray maps.
In \cite{ZWS}, Zhu, Wang and Shi  studied the structure and properties of cyclic codes  over $F_2+vF_2$, where $v^2=v$;
the authors showed that  cyclic codes over the ring  are principally generated.
In the  subsequent paper \cite{ZW},  Zhu and Wang investigated a class of constacyclic
codes over $F_p+vF_p$ with $p$ being an odd prime and $v^2=v$.
It was proved that the image of a $(1-2v)$-constacyclic code of length $n$ over $F_p+vF_p$
under the Gray map is a distance-invariant cyclic code of length $2n$ over $F_p$
and $(1-2v)$-constacyclic codes over the ring are principally generated.
These rings in the mentioned papers are not finite chain rings.

In this paper, we  generalize the results from   \cite{ZW} to all constacyclic codes over
$R=F_p+vF_p$, where $v^2=v$ and $p$ is an odd prime.
We characterize the polynomial generators of all  constacyclic codes over $F_p+vF_p$,
and
show that constacyclic codes over $R$ of arbitrary length
are principally generated.
The dual codes of the constacyclic codes over $R$ are also discussed.
The rest sections of this paper are organized as follows. In Section 2,
the  preliminary concepts and results are provided. In Section 3,
the polynomial generators of all  constacyclic codes over $F_p+vF_p$
are characterized and their dual codes are also determined.

\section{Preliminaries}
Let $F_p$ be the finite field of order $p$  and $F_p^*$
the multiplicative group of $F_p$,  where $p$ is an odd prime.
It is known that $F_p[x]/\langle x^n-\lambda\rangle$ is a principal ideal ring for
any element $\lambda$ in $F_p^*$.
If $p(x)+\langle x^n-\lambda\rangle\in F_p[x]/\langle x^n-\lambda\rangle$,
then the ideal generated by $p(x)+\langle x^n-\lambda\rangle$, denoted by $\langle p(x)\rangle$,
is the smallest ideal in $F_p[x]/\langle x^n-\lambda\rangle$ containing $p(x)+\langle x^n-\lambda\rangle$.
In addition, we  adopt the notation $[g(x)]$ to denote
the ideal in $F_p[x]/\langle x^n-\lambda\rangle$  generated by $g(x)+\langle x^n-\lambda\rangle$
with $g(x)$ being a monic divisor of $x^n-\lambda$; in that case, $g(x)$
is called a {\it generator polynomial}.

Throughout this paper,  $R$ denotes the commutative ring $F_p+vF_p=\{a+vb|a,b\in F_p\}$ with $v^2=v$.
It turns out that $R$ is a principal ideal ring and has only two non-trivial ideals, namely
$\langle v\rangle=\{va |a\in F_p\}$ and  $\langle 1-v\rangle=\{(1-v)b |b\in F_p\}.$
One can easily check that $\langle v\rangle$ and $\langle 1-v\rangle$  are
maximal ideals in $R$, hence $R$ is not a chain ring.
Let $R^n$ be the $R$-module of $n$-tuples over $R$. A linear code $C$ of length $n$ over $R$ is an
$R$-submodule of $R^n$.
For any linear code $C$ of length $n$ over $R$, the dual $C^\perp$
is defined as $C^\perp=\{u\in R^n\,|\,u\cdot w=0, \mbox{for any $w\in C$}\}$,
where $u\cdot w$ denotes the standard Euclidean inner product of $u$ and $w$ in $R^n$.
Note that $C^\perp$ is linear whether or not $C$ is linear.
The Gray map $\phi$ from $R$ to $F_p\oplus F_p$  given by $\phi(c)=(a,a+b)$,  is a ring isomorphism,
which means that $R$ is isomorphic to the ring $F_p\oplus F_p$. Therefore $R$ is a finite Frobenius ring.
If $C$ is linear, then $|C||C^\perp|=|R|^n$ (See \cite{WJ}).

Let  $\theta$ be a unit in $R$. A linear code $C$ of length $n$ over $R$
is called $\theta$-constacyclic
if for every $(c_0, c_1, \cdots, c_{n-1})\in C$, we have $(\theta c_{n-1}, c_0, c_1, \cdots, c_{n-2})\in C$.
It is well known that a $\theta$-constacyclic code of length $n$ over $R$ can be identified with
an ideal in the quotient ring $R[x]/\langle x^n-\theta\rangle$ via the $R$-module isomorphism as follows:
$$R^n\longrightarrow R[x]/\langle x^n-\theta\rangle$$
$$(c_0, c_1, \cdots, c_{n-1})\longmapsto c_0+c_1x+ \cdots +c_{n-1}x^{n-1}\,\,({\rm mod}\langle x^n-\theta\rangle).$$
If $\theta=1$, $\theta$-constacyclic codes are just cyclic codes and while $\theta=-1$,
$\theta$-constacyclic codes are known as negacyclic codes.

Let $A,B$ be codes over $R$. We denote $A\oplus B = \{a+b|a \in A, b \in B\}$.
Note that any element $c$ of $R^n$ can be expressed as $c=a+vb=v(a+b)+(1-v)a$, where $a,b \in F_p^n$.
Let $C$ be a linear code of length $n$ over $R$. Define
$$
C_{v} = \{b\in F_p^n|va+(1-v)b\in C, \text{for some} \, \,a\in F_p^n\},
$$
and
$$
C_{1-v}= \{a\in F_p^n|va+(1-v)b\in C, \text{for some} \, \,b\in F_p^n\}.
$$
Obviously, $C_{v}$ and $C_{1-v}$ are linear codes over $F_p$. By the definition of $C_{v}$ and $C_{1-v}$,
we have that $C$ can be uniquely expressed as $C=vC_{1-v}\oplus (1-v)C_v$.
It can be routine to check that for any elements $a\in C_{1-v}$ and $b\in C_{v}$,
we get $va+(1-v)b\in C$; in particular, $|C|=|C_{v}||C_{1-v}|$.

\section{Constacyclic codes over $R=F_p+vF_p$}
In this section, we let $R_n=R[x]/\langle x^n-\theta \rangle$ with
$\theta=\lambda+v\mu$ being a unit in $R$, where $\lambda$ and $\mu$
are elements in $F_p$.
As usual, we identify $R_n$ with the set of all polynomials over $R$ of degree less than $n$.
Let $f_1(x), f_2(x),\cdots,f_s(x)\in R_n$.
The ideal generated by $f_1(x), f_2(x),\cdots,f_s(x)$ will be denoted by
$$\langle f_1(x), f_2(x),\cdots,f_s(x)\rangle.$$

The following lemma characterizes the units in $R$.

\begin{lem}
Let $\theta=\lambda+v\mu$ be an element in $R$, where $\lambda$ and $\mu$
are elements in $F_p$. Then $\theta=\lambda+v\mu$ is a unit of $R$
if and only if  $\lambda \neq 0$ and $\lambda+\mu\neq 0$.
\end{lem}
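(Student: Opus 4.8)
The plan is to exploit the ring isomorphism $\phi\colon R\to F_p\oplus F_p$ given in the preliminaries by $\phi(a+vb)=(a,a+b)$, under which units correspond to units. Since $F_p\oplus F_p$ is a direct product of fields, an element $(x,y)$ is a unit precisely when $x\neq 0$ and $y\neq 0$. Applying $\phi$ to $\theta=\lambda+v\mu$ gives $\phi(\theta)=(\lambda,\lambda+\mu)$, so $\theta$ is a unit in $R$ if and only if $\lambda\neq 0$ and $\lambda+\mu\neq 0$. This is the cleanest route and it is essentially a one-line argument once the isomorphism is invoked.

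For readers who prefer a self-contained computation, I would also sketch the direct approach: suppose $\theta'=\lambda'+v\mu'$ satisfies $\theta\theta'=1$. Using $v^2=v$, expand
\begin{equation*}
\theta\theta'=\lambda\lambda'+v\bigl(\lambda\mu'+\mu\lambda'+\mu\mu'\bigr),
\end{equation*}
so $\theta\theta'=1$ forces $\lambda\lambda'=1$ and $\lambda\mu'+\mu\lambda'+\mu\mu'=0$. The first equation needs $\lambda\neq 0$; substituting $\lambda'=\lambda^{-1}$ into the second and solving for $\mu'$ yields $\mu'=-\mu\bigl(\lambda(\lambda+\mu)\bigr)^{-1}$, which exists exactly when $\lambda+\mu\neq 0$. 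Conversely, when $\lambda\neq 0$ and $\lambda+\mu\neq 0$, the element $\theta'=\lambda^{-1}-v\mu\bigl(\lambda(\lambda+\mu)\bigr)^{-1}$ is a genuine inverse, which one checks by direct multiplication.

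I do not anticipate a real obstacle here; the only thing to be careful about is bookkeeping with the idempotent relation $v^2=v$ when expanding products, and making sure the "only if" direction genuinely uses both conditions (the failure of $\lambda+\mu\neq 0$ manifests only in the second coordinate under $\phi$, i.e. in the $\langle 1-v\rangle$-component, which is easy to overlook in a purely computational treatment). Presenting the $\phi$-based argument as the main proof and relegating the explicit inverse to a remark keeps the exposition short.
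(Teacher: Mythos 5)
Your proposal is correct, but your primary argument takes a genuinely different route from the paper's. The paper proves both directions by direct computation in $R$: for necessity it expands $(\lambda+v\mu)(a+vb)=1$ into the system $\lambda a=1$, $(\lambda+\mu)b+\mu a=0$ and reads off the two conditions; for sufficiency it exhibits the explicit inverse $\theta'=\lambda^{-1}+v[-(\lambda+\mu)^{-1}\mu\lambda^{-1}]$ and verifies $\theta\theta'=1$ by a chain of simplifications. Your main argument instead invokes the ring isomorphism $\phi\colon R\to F_p\oplus F_p$, $\phi(a+vb)=(a,a+b)$, already recorded in the paper's preliminaries, under which $\theta\mapsto(\lambda,\lambda+\mu)$; since units of a product of fields are exactly the pairs with both coordinates nonzero, the lemma follows in one line. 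This is cleaner and makes the appearance of the two quantities $\lambda$ and $\lambda+\mu$ conceptually transparent (they are the images of $\theta$ in the residue fields of the two maximal ideals $\langle 1-v\rangle$ and $\langle v\rangle$), at the cost of leaning on the asserted isomorphism rather than being self-contained; the paper's computation buys an explicit formula for $\theta^{-1}$. Your secondary sketch is essentially the paper's proof, and you correctly flag the one small point needing care in the "only if" direction: when $\lambda\neq 0$ and $\lambda+\mu=0$, the equation $(\lambda+\mu)\mu'=-\mu\lambda^{-1}$ is unsolvable because $\mu=-\lambda\neq 0$, which is exactly the observation that makes the necessity of $\lambda+\mu\neq 0$ rigorous.
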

\begin{proof}
($\Longrightarrow$) Suppose that $\theta=\lambda+v\mu$ is a unit of $R$.
Then there exists  elements $a,b\in F_p$ and  $\theta'=a+vb\in R$ such that $\theta\theta'=1$,
that is, $(\lambda+v\mu)(a+vb)=\lambda a+v(\lambda b+\mu a+ \mu b)=1$.
So we have that $\lambda a=1$ and $(\lambda+ \mu)b+\mu a=0$,
which implies that $\lambda \neq 0$ and $\lambda+\mu\neq 0$.

($\Longleftarrow$) Let $\theta=\lambda+v\mu \in R$, where $\lambda \neq 0$ and $\lambda+\mu\neq 0$.
Setting $\theta'=\lambda^{-1}+v[-(\lambda+\mu)^{-1}\mu\lambda^{-1}]$. Then
\begin{eqnarray*}
\theta\theta' & = & (\lambda+v\mu)\big[\lambda^{-1}+v\big(-(\lambda+\mu)^{-1}\mu\lambda^{-1}\big)\big]\\
              & = & 1+v[\mu\lambda^{-1}-\mu(\lambda+\mu)^{-1}-\mu(\lambda+\mu)^{-1}\cdot\mu\lambda^{-1}]\\
              & = & 1+v[\mu\lambda^{-1}-\mu(\lambda+\mu)^{-1}(1+\mu\lambda^{-1})]\\
              & = & 1+v[\mu\lambda^{-1}-\mu(\lambda+\mu)^{-1}(\lambda\lambda^{-1}+\mu\lambda^{-1})]\\
              & = & 1+v[\mu\lambda^{-1}-\mu(\lambda+\mu)^{-1}(\lambda+\mu)\lambda^{-1})]\\
              & = & 1.
\end{eqnarray*}
Hence $\theta=\lambda+v\mu$ is a unit of $R$.
\end{proof}

\begin{Theorem}\label{theorem-1}
Let $C=vC_{1-v}\oplus(1-v)C_v$ be a linear code of length $n$ over $R$.
Then $C$ is a $\theta$-constacyclic code of length $n$ over $R$ if and only if
$C_v$ and $C_{1-v}$ are $\lambda$-constacyclic and $(\lambda+\mu)$-constacyclic codes of length $n$ over $F_p$, respectively.
\end{Theorem}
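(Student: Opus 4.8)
The plan is to exploit the CRT-style decomposition $R = vF_p \oplus (1-v)F_p$ coordinatewise, so that a codeword $c = (c_0,\dots,c_{n-1}) \in R^n$ is written as $c = v a + (1-v) b$ with $a = (a_0,\dots,a_{n-1}) \in C_{1-v}$ and $b = (b_0,\dots,b_{n-1}) \in C_v$. The crucial arithmetic observation is how the multiplier $\theta$ interacts with the idempotents: since $v^2 = v$ and $v(1-v) = 0$, one computes $v\theta = v(\lambda + v\mu) = v(\lambda+\mu)$ and $(1-v)\theta = (1-v)\lambda$. Hence applying the constacyclic shift to $c$ gives
$$
(\theta c_{n-1}, c_0, \dots, c_{n-2}) = v\big((\lambda+\mu) b_{n-1}, b_0, \dots, b_{n-2}\big) + (1-v)\big(\lambda a_{n-1}, a_0, \dots, a_{n-2}\big),
$$
i.e. the $\theta$-shift over $R$ decouples into the $(\lambda+\mu)$-shift applied to the $C_v$-component and the $\lambda$-shift applied to the $C_{1-v}$-component.

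First I would prove the ``if'' direction: assume $C_{1-v}$ is $\lambda$-constacyclic and $C_v$ is $(\lambda+\mu)$-constacyclic over $F_p$. Take any $c \in C$; write $c = va + (1-v)b$ with $a \in C_{1-v}$, $b \in C_v$ (here I use the fact, recorded in Section 2, that $C = vC_{1-v} \oplus (1-v)C_v$). By the displayed identity, the $\theta$-shift of $c$ equals $v a' + (1-v) b'$ where $a'$ is the $\lambda$-shift of $a$ and $b'$ is the $(\lambda+\mu)$-shift of $b$; by hypothesis $a' \in C_{1-v}$ and $b' \in C_v$, and again by the Section 2 fact that $va' + (1-v)b' \in C$ whenever $a' \in C_{1-v}$, $b' \in C_v$, we conclude the shift of $c$ lies in $C$. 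Hence $C$ is $\theta$-constacyclic.

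Conversely, assume $C$ is $\theta$-constacyclic. Let $a \in C_{1-v}$; by definition there is $b \in F_p^n$ with $va + (1-v)b \in C$ (in fact any $b \in C_v$ works). Applying the $\theta$-shift and using the displayed identity, $v a' + (1-v) b' \in C$ where $a'$ is the $\lambda$-shift of $a$; reading off the $(1-v)$-component via the definition of $C_{1-v}$ shows $a' \in C_{1-v}$, so $C_{1-v}$ is $\lambda$-constacyclic. The argument that $C_v$ is $(\lambda+\mu)$-constacyclic is symmetric, using the $v$-component. The main thing to get right is the bookkeeping in this last step: one must check that $\lambda$ is a unit of $F_p$ (so that ``$\lambda$-constacyclic'' is a meaningful notion) and likewise $\lambda+\mu \neq 0$ — but this is exactly Lemma~2.1, since $\theta$ is assumed to be a unit of $R$. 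I do not expect any genuine obstacle here; the only subtlety is carefully justifying that the component of a sum $v\alpha + (1-v)\beta$ is well-defined, which follows because the Gray map / CRT isomorphism $R \cong F_p \oplus F_p$ makes the decomposition $c = v a + (1-v)b$ unique.
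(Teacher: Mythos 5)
Your overall strategy is exactly the paper's: write $c_i = va_i+(1-v)b_i$ with $a\in C_{1-v}$, $b\in C_v$, and compute how the multiplier $\theta$ acts on each idempotent component. The arithmetic facts you isolate, $v\theta=v(\lambda+\mu)$ and $(1-v)\theta=(1-v)\lambda$, are correct and are indeed the crux. But the displayed identity that carries the whole proof is wrong: under your own convention the $v$-component of $c_i$ is $a_i$, so the correct decoupling is
$$(\theta c_{n-1},c_0,\dots,c_{n-2})=v\big((\lambda+\mu)a_{n-1},a_0,\dots,a_{n-2}\big)+(1-v)\big(\lambda b_{n-1},b_0,\dots,b_{n-2}\big),$$
i.e.\ the $(\lambda+\mu)$-twisted shift acts on the $C_{1-v}$-component $a$ and the $\lambda$-twisted shift on the $C_v$-component $b$ --- the opposite of what you wrote. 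This is not a cosmetic slip: it propagates into your ``if'' direction, where you assume ``$C_{1-v}$ is $\lambda$-constacyclic and $C_v$ is $(\lambda+\mu)$-constacyclic,'' swapping the roles relative to the theorem (which asserts $C_v$ is $\lambda$-constacyclic and $C_{1-v}$ is $(\lambda+\mu)$-constacyclic). The swapped statement is false in general when $\mu\neq 0$, since a $\lambda$-constacyclic code need not be closed under the $(\lambda+\mu)$-shift. There is also an internal inconsistency: in the ``if'' direction you call $a'$ the $\lambda$-shift of $a$, contradicting your own display.

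Once the components are matched correctly, the rest of your argument (closure under the decoupled shifts, uniqueness of the decomposition, and the fact that $va'+(1-v)b'\in C$ whenever $a'\in C_{1-v}$ and $b'\in C_v$) goes through and coincides with the paper's proof of Theorem \ref{theorem-1}. Your side remark that the unit criterion for $\theta$ guarantees $\lambda\neq 0$ and $\lambda+\mu\neq 0$, so that both twisted shifts are genuinely constacyclic shifts, is a worthwhile point the paper leaves implicit.
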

\begin{proof}
($\Longrightarrow$) Let  $(r_0, r_1,\cdots,r_{n-1})$ be an arbitrary element in  $C_{1-v}$,
and let  $(q_0, q_1,\cdots,q_{n-1})$ be an arbitrary element in $ C_v$.
We assume that $c_i=vr_i+(1-v)q_i, i=0,1,\cdots,n-1;$ hence we get that
$(c_0, c_1,\cdots,c_{n-1})\in C$. Since $C$ is a $\theta$-constacyclic code of length $n$ over $R$,
then $(\theta c_{n-1},c_0,\cdots,c_{n-2})\in C$. Note that
\begin{eqnarray*}
\theta c_{n-1} & = & (\lambda+v\mu)[v r_{n-1}+(1-v)q_{n-1}]\\
               & = & v[(\lambda+\mu)r_{n-1}]+(1-v)(\lambda q_{n-1}),
\end{eqnarray*}
then
\begin{eqnarray*}
(\theta c_{n-1},c_0,\cdots,c_{n-2})& = & v((\lambda+\mu)r_{n-1}, r_0,\cdots,r_{n-2})\\
& + & (1-v)(\lambda q_{n-1}, q_0,\cdots,q_{n-2})\in C,
\end{eqnarray*}
hence $((\lambda+\mu)r_{n-1}, r_0,\cdots,r_{n-2})\in C_{1-v}$ and $(\lambda q_{n-1}, q_0,\cdots,q_{n-2})\in C_v$,
which implies that $C_v$ and $C_{1-v}$ are $\lambda$-constacyclic and $(\lambda+\mu)$-constacyclic codes of length $n$ over $F_p$, respectively.

($\Longleftarrow$) Suppose that $C_v$ and $C_{1-v}$ are $\lambda$-constacyclic and $(\lambda+\mu)$-constacyclic codes of length $n$ over $F_p$, respectively.
Let $(c_0,c_1,\cdots,c_{n-1})\in C$, where $c_i=vr_i+(1-v)q_i, i=0,1,\cdots,n-1$.
It follows that $(r_0,r_1\cdots,r_{n-1})\in C_{1-v}$ and $(q_0,q_1\cdots,q_{n-1})\in C_{v}$.
Note that
\begin{eqnarray*}
(\theta c_{n-1},c_0,\cdots,c_{n-2})&=&v((\lambda+\mu)r_{n-1}, r_0,\cdots,r_{n-2})\\
 & + & (1-v)(\lambda q_{n-1}, q_0,\cdots,q_{n-2})\\
                                   &\in & v C_{1-v}\oplus(1-v)C_v = C,
\end{eqnarray*}
hence $C$ is a $\theta$-constacyclic code of length $n$ over $R$.
\end{proof}
\begin{Theorem}\label{theorem-2}
Let $C=vC_{1-v}\oplus(1-v)C_v$ be a $\theta$-constacyclic code of length $n$ over $R$.
Then $C=\langle vg_{1-v}(x), (1-v)g_v(x)\rangle$, where $g_{1-v}(x)$ and $g_v(x)$ are the generator polynomials of
$C_{1-v}$ and $C_v$, respectively.
\end{Theorem}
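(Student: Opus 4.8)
The plan is to exploit the orthogonal idempotents $v$ and $1-v$ to split $R_n$ into a product of two ordinary polynomial quotient rings over $F_p$, and then transport the principal-ideal structure of each factor back to $R_n$. First I would record the identities $v\theta = v(\lambda+\mu)$ and $(1-v)\theta = (1-v)\lambda$, which follow at once from $v^2=v$ and $(1-v)v=0$; consequently $v(x^n-\theta) = v\bigl(x^n-(\lambda+\mu)\bigr)$ and $(1-v)(x^n-\theta) = (1-v)(x^n-\lambda)$ in $R_n$. From this one obtains ring isomorphisms $vR_n \cong F_p[x]/\langle x^n-(\lambda+\mu)\rangle$ and $(1-v)R_n \cong F_p[x]/\langle x^n-\lambda\rangle$ (sending $vf(x)$, resp. $(1-v)f(x)$, to the residue class of $f(x)$), together with the internal direct sum decomposition $R_n = vR_n \oplus (1-v)R_n$ as rings.

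By Theorem~\ref{theorem-1}, $C_{1-v}$ is a $(\lambda+\mu)$-constacyclic code and $C_v$ is a $\lambda$-constacyclic code of length $n$ over $F_p$; hence, viewed inside the respective quotient rings, each is an ideal of a principal ideal ring and so has a generator polynomial: $C_{1-v}=[g_{1-v}(x)]$ with $g_{1-v}(x)\mid x^n-(\lambda+\mu)$, and $C_v=[g_v(x)]$ with $g_v(x)\mid x^n-\lambda$.

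Next I would prove the two inclusions. For $\langle vg_{1-v}(x),(1-v)g_v(x)\rangle\subseteq C$: since $g_{1-v}(x)\in C_{1-v}$ and $0\in C_v$, the observation at the end of Section~2 gives $vg_{1-v}(x)=vg_{1-v}(x)+(1-v)\cdot 0\in C$, and symmetrically $(1-v)g_v(x)\in C$; as $C$ is an ideal of $R_n$, the generated ideal lies in $C$. For the reverse inclusion, take $c\in C$ and write $c=vr(x)+(1-v)q(x)$ with $r(x)\in C_{1-v}$ and $q(x)\in C_v$. Because $r(x)$ represents an element of $[g_{1-v}(x)]$ in $F_p[x]/\langle x^n-(\lambda+\mu)\rangle$, we have $r(x)\equiv a(x)g_{1-v}(x)$ modulo $x^n-(\lambda+\mu)$ for some $a(x)$, and multiplying by $v$ and using the identity above yields $vr(x)=\bigl(va(x)\bigr)\bigl(vg_{1-v}(x)\bigr)\in\langle vg_{1-v}(x)\rangle$ in $R_n$; likewise $(1-v)q(x)\in\langle(1-v)g_v(x)\rangle$. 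Hence $c\in\langle vg_{1-v}(x),(1-v)g_v(x)\rangle$, which gives $C\subseteq\langle vg_{1-v}(x),(1-v)g_v(x)\rangle$ and completes the proof.

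The only genuinely delicate point is the bookkeeping in the previous paragraph: the generator polynomials $g_v(x)$ and $g_{1-v}(x)$ are defined as divisors of $x^n-\lambda$ and $x^n-(\lambda+\mu)$ over $F_p$, not of $x^n-\theta$ over $R$, so one must be careful that multiplication by the idempotents carries the ideal $\langle x^n-\theta\rangle$ to the right places and that the two representatives recombine into the original element of $C$. Everything else is a routine consequence of the idempotent decomposition $R_n\cong F_p[x]/\langle x^n-(\lambda+\mu)\rangle\times F_p[x]/\langle x^n-\lambda\rangle$ and the fact that these factors are principal ideal rings.
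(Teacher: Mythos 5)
Your proof is correct and follows essentially the same route as the paper: show $vg_{1-v}(x)$ and $(1-v)g_v(x)$ lie in $C$ for one inclusion, and decompose an arbitrary element of $C$ via the generator polynomials of $C_{1-v}$ and $C_v$ for the other. The only difference is that you make explicit the bookkeeping the paper leaves implicit, namely that $v(x^n-\theta)=v\bigl(x^n-(\lambda+\mu)\bigr)$ and $(1-v)(x^n-\theta)=(1-v)(x^n-\lambda)$, so that reduction in the $F_p$-quotients is compatible with reduction modulo $x^n-\theta$ after multiplying by the idempotents; this is a worthwhile clarification but not a different argument.
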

\begin{proof}
Since $C_v$ and $C_{1-v}$ are $\lambda$-constacyclic and $(\lambda+\mu)$-constacyclic codes of length $n$ over $F_p$, respectively,
we may assume that the generator polynomials of $C_v$ and $C_{1-v}$ are $g_v(x)$ and $g_{1-v}(x)$, respectively.
Then $vg_{1-v}(x)\in vC_{1-v}\subseteq C$ and $(1-v)g_v(x)\in (1-v)C_v\subseteq C$,
hence $\langle vg_{1-v}(x), (1-v)g_v(x)\rangle\subseteq C$.

Let $f(x)\in C$. Since $C=vC_{1-v}\oplus(1-v)C_v$, then
there are  $s'(x)=g_{1-v}(x)s(x)\in C_{1-v}$ and $u'(x)=u(x)g_v(x)\in C_v$ such that $f(x)=vs'(x)+(1-v)u'(x)=vg_{1-v}(x)s(x)+(1-v)g_v(x)u(x)$,
where $s(x), u(x)\in F_p[x]\subseteq R[x]$. Hence $f(x)\in \langle vg_{1-v}(x), (1-v)g_v(x)\rangle$.
Therefore $C\subseteq \langle vg_{1-v}(x), (1-v)g_v(x)\rangle$. This gives that $C=\langle vg_{1-v}(x), (1-v)g_v(x)\rangle$.
\end{proof}
\begin{Proposition}\label{equation}
Let $C=vC_{1-v}\oplus(1-v)C_v$ be a $\theta$-constacyclic code of length $n$ over $R$
and $g_{1-v}(x), g_v(x)$ are the generator polynomials of $C_{1-v}$ and $C_v$, respectively.
Then $|C|=p^{2n-{\rm deg}(g_{1-v}(x))-{\rm deg}(g_v(x))}$.
\end{Proposition}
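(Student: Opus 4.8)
The plan is to reduce the computation of $|C|$ to the dimensions of the two component codes $C_v$ and $C_{1-v}$ over $F_p$, and then apply the familiar dimension formula for a constacyclic code over a field in terms of the degree of its generator polynomial. The key bridge is the identity $|C| = |C_v|\,|C_{1-v}|$ established in Section 2 (immediately after the definitions of $C_v$ and $C_{1-v}$), which itself follows from the unique decomposition $C = vC_{1-v}\oplus(1-v)C_v$. So it suffices to show $|C_{1-v}| = p^{\,n-\deg g_{1-v}(x)}$ and $|C_v| = p^{\,n-\deg g_v(x)}$.

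First, by Theorem \ref{theorem-1}, $C_{1-v}$ is a $(\lambda+\mu)$-constacyclic code and $C_v$ is a $\lambda$-constacyclic code, both of length $n$ over $F_p$; since $\theta = \lambda+v\mu$ is a unit we have $\lambda\neq 0$ and $\lambda+\mu\neq 0$ (by the lemma characterizing the units of $R$), so $x^n-\lambda$ and $x^n-(\lambda+\mu)$ are genuine constacyclic moduli and the rings $F_p[x]/\langle x^n-\lambda\rangle$ and $F_p[x]/\langle x^n-(\lambda+\mu)\rangle$ are principal ideal rings. Hence $g_v(x)$ is a monic divisor of $x^n-\lambda$ and $g_{1-v}(x)$ is a monic divisor of $x^n-(\lambda+\mu)$, and each of $C_v$, $C_{1-v}$ is the ideal generated by the residue of its generator polynomial. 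For such a code, say with generator polynomial $g(x)$ of degree $r$ dividing $x^n-\alpha$, I would check that $\{\,g(x),\,xg(x),\,\dots,\,x^{n-r-1}g(x)\,\}$ is an $F_p$-basis: these residues are linearly independent because any nontrivial $F_p$-combination of them is a polynomial of degree between $r$ and $n-1$, hence a nonzero element of $F_p[x]/\langle x^n-\alpha\rangle$; and they span the code because every codeword has the form $a(x)g(x)\bmod (x^n-\alpha)$, and dividing $a(x)$ by $(x^n-\alpha)/g(x)$, which has degree $n-r$, reduces to the case $\deg a(x) < n-r$. Thus the code has $F_p$-dimension $n-r$, so cardinality $p^{\,n-r}$, giving $|C_v| = p^{\,n-\deg g_v(x)}$ and $|C_{1-v}| = p^{\,n-\deg g_{1-v}(x)}$.

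Combining these, $|C| = |C_v|\,|C_{1-v}| = p^{\,n-\deg g_v(x)}\cdot p^{\,n-\deg g_{1-v}(x)} = p^{\,2n-\deg g_{1-v}(x)-\deg g_v(x)}$, as required. The only step that needs genuine care is the basis claim for a constacyclic code over a field with a prescribed generator polynomial; everything else is bookkeeping. That step is the exact constacyclic analogue of the classical dimension formula for cyclic codes and relies only on division with remainder in $F_p[x]$ together with the principal-ideal-ring property recorded in Section 2, so it presents no real obstacle. (Alternatively, one could count directly from the description $C=\langle vg_{1-v}(x),(1-v)g_v(x)\rangle$ of Theorem \ref{theorem-2}, but routing through $|C|=|C_v||C_{1-v}|$ is cleaner.)
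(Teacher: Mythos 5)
Your proposal is correct and follows the same route as the paper: the paper's entire proof is the one-line reduction $|C|=|C_v||C_{1-v}|=p^{2n-\deg g_{1-v}(x)-\deg g_v(x)}$, taking the field-case dimension formula for granted. You simply spell out the standard basis argument for a constacyclic code over $F_p$ that the paper leaves implicit.
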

\begin{proof}
Since $|C|=|C_v||C_{1-v}|$,
then $|C|=p^{2n-{\rm deg}(g_{1-v}(x))-{\rm deg}(g_v(x))}$.
\end{proof}

For the proof of the following theorem we introduce some notations.
Since the ring $R$ has two maximal ideals $\langle v\rangle$ and $\langle 1-v\rangle$
with the same residue field $F_p$, thus we have two canonical projections defined as follows:
$$\sigma: R=F_p+vF_p \longrightarrow F_p$$
$$va+(1-v)b\longmapsto a;$$
and
$$\tau: R=F_p+vF_p \longrightarrow F_p$$
$$va+(1-v)b\longmapsto b.$$
Denote by $r^{\sigma}$ and $r^{\tau}$ the images of an element $r\in R$ under these two projections, respectively.
These two projections can be extended naturally from $R^n$ to $F_p^n$ and from $R[x]$ to $F_p[x]$.

Let $f(x)=a_0+a_1x+a_2x^2+\cdots+a_{n-1}x^{n-1}$, where $a_i\in R$ for $0\leq i\leq n-1$, and we denote
$$f(x)^{\sigma}=a^{\sigma}_0+a^{\sigma}_1x+\cdots+a^{\sigma}_{n-1}x^{n-1};
f(x)^{\tau}=a^{\tau}_0+a^{\tau}_1x+\cdots+a^{\tau}_{n-1}x^{n-1}.$$
Hence $f(x)$ has a unique expression as $f(x)=vf(x)^{\sigma}+(1-v)f(x)^{\tau}$.

For a code $C$ of length $n$ over $R$, $a\in R$.
The submodule quotient is a linear code of length $n$ over $R$, defined as follows:
$$(C:a)=\{r\in R^n|ar\in C\}.$$

\begin{Theorem}\label{theorem-3}
Let $C=vC_{1-v}\oplus(1-v)C_v$ be a $\theta$-constacyclic code of length $n$ over $R$.
If $C=\langle vh_1(x), (1-v)h_2(x)\rangle$, where $h_1(x)$ and $ h_2(x)\in F_p[x]$ are monic with
$h_1(x)\,|\,(x^n-(\lambda+\mu))$ and $ h_2(x)\,|\,(x^n-\lambda)$,
then $C_{1-v}=[h_1(x)]$ and $C_v=[h_2(x)]$,
that is, $h_1(x)$ and $ h_2(x)$ are the generator polynomials of constacyclic codes $C_{1-v}$ and $C_v$, respectively.
\end{Theorem}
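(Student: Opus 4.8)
The plan is to push the given ideal presentation of $C$ through the two canonical projections $\sigma$ and $\tau$ and to read off the generator polynomials of $C_{1-v}$ and $C_v$ from the images. First I would record the behaviour of $\sigma$ and $\tau$ on the data at hand. Writing $\theta=\lambda+v\mu=v(\lambda+\mu)+(1-v)\lambda$, we get $\theta^{\sigma}=\lambda+\mu$ and $\theta^{\tau}=\lambda$, so the coefficient-wise extensions of $\sigma$ and $\tau$ descend to surjective ring homomorphisms
$$\sigma\colon R_n=R[x]/\langle x^n-\theta\rangle\longrightarrow F_p[x]/\langle x^n-(\lambda+\mu)\rangle,\qquad \tau\colon R_n\longrightarrow F_p[x]/\langle x^n-\lambda\rangle,$$
since $x^n-\theta$ is carried to $x^n-(\lambda+\mu)$ and to $x^n-\lambda$ respectively. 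Note also $v^{\sigma}=1$, $(1-v)^{\sigma}=0$, $v^{\tau}=0$, $(1-v)^{\tau}=1$.

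Next I would identify the images of $C$. Straight from the definitions of $C_v$ and $C_{1-v}$ in Section~2, an element $a\in F_p^n$ lies in $C_{1-v}$ if and only if $a=c^{\sigma}$ for some $c\in C$, and $b\in F_p^n$ lies in $C_v$ if and only if $b=c^{\tau}$ for some $c\in C$; under the standard module/ideal identification this says $C^{\sigma}=C_{1-v}$ and $C^{\tau}=C_v$. Now apply $\sigma$ to the hypothesis $C=\langle vh_1(x),(1-v)h_2(x)\rangle$: being a surjective ring homomorphism, $\sigma$ sends this ideal onto the ideal generated by the images of its generators, i.e. onto $\langle v^{\sigma}h_1(x),(1-v)^{\sigma}h_2(x)\rangle=\langle h_1(x)\rangle$ in $F_p[x]/\langle x^n-(\lambda+\mu)\rangle$. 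Hence $C_{1-v}=C^{\sigma}=\langle h_1(x)\rangle$, and since $h_1(x)$ is monic with $h_1(x)\mid x^n-(\lambda+\mu)$, this ideal is precisely $[h_1(x)]$, so $h_1(x)$ is the generator polynomial of $C_{1-v}$. Applying $\tau$ in exactly the same way gives $C_v=C^{\tau}=\langle h_2(x)\rangle=[h_2(x)]$, with $h_2(x)$ the generator polynomial of $C_v$; this completes the argument.

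The proof is essentially bookkeeping with the two projections; the only points needing genuine (though routine) justification are that $\sigma$ and $\tau$ are well defined on the quotient rings $R_n$ — i.e. the identities $\theta^{\sigma}=\lambda+\mu$ and $\theta^{\tau}=\lambda$ together with compatibility with $x^n-\theta$ — and the standard fact that a surjective ring homomorphism maps a finitely generated ideal onto the ideal generated by the images of its generators. I do not foresee a substantive obstacle.
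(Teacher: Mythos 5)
Your proof is correct, but it takes a cleaner, more structural route than the paper's. The paper introduces the submodule quotients $(C:v)$ and $(C:(1-v))$ and proves the theorem in two steps: first that $(C:v)^{\sigma}=C_{1-v}$ and $(C:(1-v))^{\tau}=C_v$ (using the decomposition $C=vC_{1-v}\oplus(1-v)C_v$), and second that $(C:v)^{\sigma}=[h_1(x)]$ and $(C:(1-v))^{\tau}=[h_2(x)]$, the latter by explicitly expanding $vf(x)=vh_1(x)s(x)+(1-v)h_2(x)t(x)$ in terms of the idempotent components and reading off $f(x)^{\sigma}=h_1(x)s(x)^{\sigma}$. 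You bypass the submodule quotients entirely: you observe that $C^{\sigma}=C_{1-v}$ and $C^{\tau}=C_v$ straight from the definitions, check that $\sigma$ and $\tau$ descend to surjective ring homomorphisms $R_n\to F_p[x]/\langle x^n-(\lambda+\mu)\rangle$ and $R_n\to F_p[x]/\langle x^n-\lambda\rangle$ (using $\theta^{\sigma}=\lambda+\mu$, $\theta^{\tau}=\lambda$), and then invoke the standard fact that a surjective ring homomorphism carries $\langle vh_1(x),(1-v)h_2(x)\rangle$ onto $\langle v^{\sigma}h_1(x),(1-v)^{\sigma}h_2(x)\rangle=\langle h_1(x)\rangle$. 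Your version is shorter and makes the underlying mechanism (the CRT splitting $R\cong F_p\times F_p$ via the idempotents $v$ and $1-v$) explicit, at the modest cost of having to justify well-definedness of the induced maps on the quotient rings; the paper's version is more computational but self-contained at the level of polynomial identities. Both arguments are sound and establish exactly the stated conclusion.
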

\begin{proof}
We shall prove the theorem by carrying out the following steps.

Step 1. If $C=vC_{1-v}\oplus(1-v)C_v$, then $(C:v)^{\sigma}=C_{1-v}$ and $(C:(1-v))^{\tau}=C_v$.

Let $a\in (C:v)$, then $va\in C$. Setting $a=va^{\sigma}+(1-v)b$, where $b\in F^n_p$.
Hence $va^{\sigma}=v[va^{\sigma}+(1-v)b]=va\in C$. Therefore $a^{\sigma}\in C_{1-v}$,
which implies that $(C:v)^{\sigma}\subseteq C_{1-v}$.
Let $y\in C_{1-v}$, then there exists $z\in F^n_p$ such that $vy+(1-v)z\in C$.
Note that $vy=v[vy+(1-v)z]\in vC\subseteq C$ and $y=vy+(1-v)y$, so $y\in (C:v)$
and $y=y^{\sigma}$, then $y\in (C:v)^{\sigma}$.
Hence $C_{1-v}\subseteq (C:v)^{\sigma}$. Therefore $(C:v)^{\sigma}=C_{1-v}$.

Let $c\in (C:(1-v))$, then $(1-v)c\in C$. Setting $c=va'+(1-v)c^{\tau}$, where $a'\in F^n_p$.
Hence $(1-v)c^{\tau}=(1-v)[va'+(1-v)c^{\tau}]=(1-v)c\in C$. Therefore $c^{\tau}\in C_v$,
which implies that $(C:(1-v))^{\tau}\subseteq C_v$.
Let $y\in C_v$, then there exists $z\in F^n_p$ such that $vz+(1-v)y\in C$.
Note that $(1-v)y=(1-v)[vz+(1-v)y]\in (1-v)C\subseteq C$ and $y=vy+(1-v)y$, so $y\in (C:(1-v))$
and $y=y^{\tau}$. Hence $C_v\subseteq (C:(1-v))^{\tau}$. Therefore $(C:(1-v))^{\tau}=C_v$.

Step 2. If $C=\langle vh_1(x), (1-v)h_2(x)\rangle$,
then $(C:v)^{\sigma}=[h_1(x)]$ and $(C:(1-v))^{\tau}=[h_2(x)]$.

Let $f(x)\in (C:v)$, then $vf(x)\in C$. Since $C=\langle vh_1(x), (1-v)h_2(x)\rangle$,
we have that $vf(x)=vh_1(x)s(x)+(1-v)h_2(x)t(x)$, for some $s(x), t(x)\in R_n$.
Write\small{
$$f(x)=vf(x)^{\sigma}+(1-v)f(x)^{\tau}, s(x)=vs(x)^{\sigma}+(1-v)s(x)^{\tau}~\hbox{and}~ t(x)=vt(x)^{\sigma}+(1-v)t(x)^{\tau},$$}
where $f(x)^{\tau}, s(x)^{\sigma}, s(x)^{\tau}, t(x)^{\sigma}, t(x)^{\tau}\in F_p[x]$. Hence
\begin{eqnarray*}
v[vf(x)^{\sigma}+(1-v)f(x)^{\tau}]& = & v h_1(x)[vs(x)^{\sigma}+(1-v)s(x)^{\tau}]\\
                             & + &(1-v)h_2(x)[vt(x)^{\sigma}+(1-v)t(x)^{\tau}].
\end{eqnarray*}
Thus $vf(x)^{\sigma}=vh_1(x)s(x)^{\sigma}+(1-v)h_2(x)t(x)^{\tau}$, which forces that $f(x)^{\sigma}=h_1(x)s(x)^{\sigma}$.
This shows that $f(x)^{\sigma}\in [ h_1(x)]$. Therefore $(C:v)^{\sigma}\subseteq [ h_1(x)]$.
Conversely, if $f(x)\in [ h_1(x)]$, then $f(x)=h_1(x)u(x)$,
for some $u(x)\in F_p[x]$. Hence $vf(x)=vh_1(x)u(x)\in \langle vh_1(x), (1-v)h_2(x)\rangle=C$,
which shows that $f(x)\in (C:v)$; note that $f(x)=vf(x)+(1-v)f(x)$, so $f(x)=f(x)^{\sigma}$.
Hence $f(x)\in (C:v)^{\sigma}$. We obtain that $[h_1(x)]\subseteq (C:v)^{\sigma}$.
Then we have that $(C:v)^{\sigma}=[h_1(x)]$.

Now we prove the second equality in this step. Let $f(x)\in (C:(1-v))$, then $(1-v)f(x)\in C$.
So we have that $(1-v)f(x)=vh_1(x)d(x)+(1-v)h_2(x)q(x)$, for some $d(x), q(x)\in R_n$.
Write\small{
$$f(x)=vf(x)^{\sigma}+(1-v)f(x)^{\tau}, d(x)=vd(x)^{\sigma}+(1-v)d(x)^{\tau}, q(x)=vq(x)^{\sigma}+(1-v)q(x)^{\tau},$$}
where $f(x)^{\sigma}, d(x)^{\sigma}, d(x)^{\tau}, q(x)^{\sigma}, q(x)^{\tau}\in F_p[x]$.
Thus $(1-v)f(x)^{\tau}=vh_1(x)d(x)^{\sigma}+(1-v)h_2(x)q(x)^{\tau}$, which forces that $f(x)^{\tau}=h_2(x)q(x)^{\tau}$.
This shows that $f(x)^{\tau}\in [h_2(x)]$. Therefore $(C:(1-v))^{\tau}\subseteq [h_2(x)]$.
Conversely, if $f(x)\in [h_2(x)]$, then $f(x)=h_2(x)w(x)$,
for some $w(x)\in F_p[x]$. Hence $(1-v)f(x)=(1-v)h_2(x)w(x)\in \langle vh_1(x), (1-v)h_2(x)\rangle=C$,
which shows that $f(x)\in (C:(1-v))$; note that $f(x)=f(x)^{\tau}$,
hence $f(x)\in (C:(1-v))^{\tau}$. Therefore we obtain that $[h_2(x)]\subseteq (C:(1-v))^{\tau}$,
and thus $(C:(1-v))^{\tau}=[h_2(x)]$.

By the above two steps, we can obtain our desired results.
Specially, $h_1(x)$ and $h_2(x)$ are the generator polynomials of constacyclic codes $C_{1-v}$ and $C_v$, respectively.
\end{proof}
\begin{Definition}
Let $C=vC_{1-v}\oplus(1-v)C_v$ be a $\theta$-constacyclic code of length $n$ over $R$.
We say that the set $S=\{vg_{1}(x), (1-v)g_2(x)\}$ is a generating set in standard form for the $\theta$-constacyclic code
$C=\langle S\rangle$ if both the following two
conditions are satisfied:

(1) For each $i\in \{1,2\}$,  $g_i(x)$ is either monic in $F_p[x]$ or equals to 0;

(2) If $g_1(x) \neq 0$, then $g_1(x)|(x^n-(\lambda+\mu))$;
if $g_2(x) \neq 0$, then $g_2(x)|(x^n-\lambda)$.
\end{Definition}

Now combining Theorem \ref{theorem-2} and \ref{theorem-3}, the following result is obtained.
\begin{Theorem}\label{generatorset}
Any nonzero constacyclic code $C$ over $R$ has a unique generating set in standard form.
\end{Theorem}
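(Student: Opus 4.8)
The plan is to combine the structural decomposition $C=vC_{1-v}\oplus(1-v)C_v$ with the two preceding theorems to prove both existence and uniqueness of a generating set in standard form.

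For \textbf{existence}, let $C$ be a nonzero $\theta$-constacyclic code of length $n$ over $R$. By Theorem \ref{theorem-1}, $C_{1-v}$ is a $(\lambda+\mu)$-constacyclic code and $C_v$ is a $\lambda$-constacyclic code over $F_p$. Since $F_p[x]/\langle x^n-(\lambda+\mu)\rangle$ and $F_p[x]/\langle x^n-\lambda\rangle$ are principal ideal rings, each of $C_{1-v}$ and $C_v$ has a monic generator polynomial dividing $x^n-(\lambda+\mu)$ and $x^n-\lambda$ respectively — with the convention that the zero code is generated by the polynomial $0$ (this covers the case where one of the two component codes is trivial). Call these $g_1(x)$ and $g_2(x)$. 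Then Theorem \ref{theorem-2} gives $C=\langle vg_1(x),(1-v)g_2(x)\rangle$, and by construction $S=\{vg_1(x),(1-v)g_2(x)\}$ satisfies conditions (1) and (2) of the definition, so it is a generating set in standard form.

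For \textbf{uniqueness}, suppose $S=\{vh_1(x),(1-v)h_2(x)\}$ and $S'=\{vh_1'(x),(1-v)h_2'(x)\}$ are both generating sets in standard form for $C$. Consider first the case where all four polynomials are nonzero; then by Theorem \ref{theorem-3} (applied to each generating set) we have $C_{1-v}=[h_1(x)]=[h_1'(x)]$ and $C_v=[h_2(x)]=[h_2'(x)]$. Since the generator polynomial of a constacyclic code over $F_p$ is the unique monic divisor of $x^n-(\lambda+\mu)$ (resp. $x^n-\lambda$) generating that code, we conclude $h_1(x)=h_1'(x)$ and $h_2(x)=h_2'(x)$, hence $S=S'$. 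For the remaining cases, observe that $g_1(x)=0$ if and only if $vg_1(x)=0$ in $R_n$, which happens exactly when $C_{1-v}$ is the zero code; indeed from Step 1 of the proof of Theorem \ref{theorem-3}, $(C:v)^\sigma=C_{1-v}$, and $g_1(x)=0$ forces $C=\langle (1-v)g_2(x)\rangle$ so that $va\in C$ implies $va=(1-v)g_2(x)t(x)$, whence $va^\sigma=0$ and $C_{1-v}=0$; conversely if $C_{1-v}=0$ then necessarily $g_1(x)=0$ by the rank formula in Proposition \ref{equation}. Thus whether each $g_i$ is zero or nonzero is determined by $C$ alone, and the nonzero ones are pinned down as above. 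This settles uniqueness in all cases.

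The main obstacle is the bookkeeping around the degenerate cases where $C_v$ or $C_{1-v}$ is the zero code, since Theorem \ref{theorem-3} as stated presupposes monic $h_1(x), h_2(x)$ dividing the relevant polynomials and does not directly address the $g_i(x)=0$ situation; one must argue separately (as sketched above, via the submodule quotient identities of Step 1 and the cardinality count of Proposition \ref{equation}) that the vanishing of a component is an intrinsic property of $C$. Once that is handled, the rest is a direct appeal to Theorems \ref{theorem-2} and \ref{theorem-3} together with the classical uniqueness of generator polynomials for constacyclic codes over the field $F_p$.
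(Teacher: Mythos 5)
Your proof is correct and takes essentially the same route as the paper, which derives Theorem~\ref{generatorset} by simply combining Theorem~\ref{theorem-2} (existence) with Theorem~\ref{theorem-3} (uniqueness via the generator polynomials of $C_{1-v}$ and $C_v$) and offers no further detail. Your additional bookkeeping for the degenerate case where a component code is zero addresses an edge case the paper silently passes over.
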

\begin{Corollary}\label{principalideal}
Let $C$ be an ideal in $R_n$,
then there exists a unique polynomial $g(x)=vg(x)^{\sigma}+(1-v)g(x)^{\tau}\in C$ such that $C=\langle g(x)\rangle$
with $g(x)^{\sigma}$ and $g(x)^{\tau}$ being monic in $F_p[x]$;
furthermore, $g(x)$ is a divisor of $x^n-\theta$. In particular, $R_n$ is a principal ideal ring.
\end{Corollary}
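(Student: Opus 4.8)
The plan is to build the single generator $g(x)$ explicitly out of the generator polynomials of $C_{1-v}$ and $C_v$, verify its three properties directly, and then deduce uniqueness from Theorem~\ref{theorem-3}. Since an ideal $C$ of $R_n$ is exactly a $\theta$-constacyclic code of length $n$ over $R$, Theorem~\ref{theorem-1} lets us write $C=vC_{1-v}\oplus(1-v)C_v$ with $C_{1-v}$ a $(\lambda+\mu)$-constacyclic and $C_v$ a $\lambda$-constacyclic code over $F_p$. Let $g_{1-v}(x)$ and $g_v(x)$ be their generator polynomials (monic divisors of $x^n-(\lambda+\mu)$ and $x^n-\lambda$ respectively, taking $x^n-(\lambda+\mu)$ and $x^n-\lambda$ themselves when a component is the zero code), and set $g(x)=vg_{1-v}(x)+(1-v)g_v(x)$.

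I would then check the asserted properties one by one. Since $v,1-v$ are orthogonal idempotents and $g_{1-v}(x),g_v(x)$ have coefficients in $F_p$, the unique decomposition $g(x)=vg(x)^{\sigma}+(1-v)g(x)^{\tau}$ forces $g(x)^{\sigma}=g_{1-v}(x)$ and $g(x)^{\tau}=g_v(x)$, both monic. For $C=\langle g(x)\rangle$: in $R_n$ one has $vg(x)=vg_{1-v}(x)$ and $(1-v)g(x)=(1-v)g_v(x)$, so $vg_{1-v}(x),(1-v)g_v(x)\in\langle g(x)\rangle$, and with Theorem~\ref{theorem-2} this gives $C=\langle vg_{1-v}(x),(1-v)g_v(x)\rangle\subseteq\langle g(x)\rangle$; the reverse inclusion is clear from $g(x)=vg_{1-v}(x)+(1-v)g_v(x)\in C$. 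For the divisibility, I would use the identity $x^n-\theta=v\bigl(x^n-(\lambda+\mu)\bigr)+(1-v)\bigl(x^n-\lambda\bigr)$, which is immediate from $\theta=\lambda+v\mu$; writing $x^n-(\lambda+\mu)=g_{1-v}(x)a(x)$ and $x^n-\lambda=g_v(x)b(x)$ with $a(x),b(x)\in F_p[x]$ then gives $x^n-\theta=g(x)\bigl(va(x)+(1-v)b(x)\bigr)$, so $g(x)\mid x^n-\theta$ in $R[x]$.

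Finally, uniqueness. Suppose $g'(x)\in C$ has monic $\sigma$- and $\tau$-parts, divides $x^n-\theta$, and satisfies $C=\langle g'(x)\rangle$. Applying the ring homomorphisms $\sigma$ and $\tau$ from $R[x]$ to $F_p[x]$ (which carry $x^n-\theta$ to $x^n-(\lambda+\mu)$ and to $x^n-\lambda$) to a factorization $x^n-\theta=g'(x)q'(x)$ shows $g'(x)^{\sigma}\mid x^n-(\lambda+\mu)$ and $g'(x)^{\tau}\mid x^n-\lambda$. As before $vg'(x)=vg'(x)^{\sigma}$ and $(1-v)g'(x)=(1-v)g'(x)^{\tau}$ in $R_n$, so $C=\langle g'(x)\rangle=\langle vg'(x)^{\sigma},(1-v)g'(x)^{\tau}\rangle$, and Theorem~\ref{theorem-3} now identifies $g'(x)^{\sigma}$ and $g'(x)^{\tau}$ as the generator polynomials of $C_{1-v}$ and $C_v$. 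By uniqueness of generator polynomials over $F_p$ we get $g'(x)^{\sigma}=g_{1-v}(x)=g(x)^{\sigma}$ and $g'(x)^{\tau}=g_v(x)=g(x)^{\tau}$, hence $g'(x)=g(x)$; and since $C$ was an arbitrary ideal, $R_n$ is a principal ideal ring. I expect the only delicate point to be the bookkeeping with degenerate components — when $C_{1-v}=0$, $C_v=0$, or $C=0$ the single generator must be represented using the degree-$n$ polynomials $x^n-(\lambda+\mu)$, $x^n-\lambda$ rather than $0$, so that the monic condition on $g'(x)^{\sigma},g'(x)^{\tau}$ is meaningful — but everything else is a routine assembly of Theorems~\ref{theorem-1} through~\ref{generatorset}.
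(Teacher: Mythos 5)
Your construction of $g(x)=vg_{1-v}(x)+(1-v)g_v(x)$, the verification that $C=\langle g(x)\rangle$ via the orthogonal idempotents, and the factorization $x^n-\theta=g(x)\big(va(x)+(1-v)b(x)\big)$ are exactly the paper's argument. The one place you diverge is the uniqueness step, and there you prove something slightly weaker than the paper does: you assume that the competing generator $g'(x)$ also divides $x^n-\theta$, because you need that hypothesis to put $\langle vg'(x)^{\sigma},(1-v)g'(x)^{\tau}\rangle$ into the form required by Theorem~\ref{theorem-3}. The corollary's phrasing (unique $g(x)$ with $C=\langle g(x)\rangle$ and monic components; \emph{furthermore} $g(x)\mid x^n-\theta$) presents the divisibility as a consequence rather than as one of the conditions singling out $g(x)$, and the paper accordingly proves uniqueness among \emph{all} $h(x)$ with $C=\langle h(x)\rangle$ and monic $\sigma$- and $\tau$-parts. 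It does so without invoking Theorem~\ref{theorem-3} in full: from $vh(x)\in C$ it gets $h(x)^{\sigma}\in (C:v)^{\sigma}=C_{1-v}$, hence $g_{1-v}(x)\mid h(x)^{\sigma}$, and from $g(x)\in\langle h(x)\rangle$ it writes $g(x)=s(x)h(x)$ and compares $\sigma$- and $\tau$-components to get $h(x)^{\sigma}\mid g_{1-v}(x)$ and $h(x)^{\tau}\mid g_v(x)$; mutual divisibility of monic polynomials then forces equality. Your argument is correct for the conditions you impose and is easily repaired along these lines, so this is a matter of scope rather than an error; your explicit handling of the degenerate components (taking $x^n-(\lambda+\mu)$ or $x^n-\lambda$ as the generator polynomial when $C_{1-v}$ or $C_v$ is zero) addresses an edge case the paper passes over silently.
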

\begin{proof}
According to  Theorem \ref{generatorset} we have $C=\langle vg_{1-v}(x), (1-v)g_v(x)\rangle$,
where $\{vg_{1-v}(x), (1-v)g_v(x)\}$ is a generating set in standard form for $C$.
Let $g(x)=vg_{1-v}(x)+(1-v)g_v(x)$. Note that
$$vg_{1-v}(x)=v[vg_{1-v}(x)+(1-v)g_v(x)],$$
and
$$(1-v)g_v(x)=(1-v)[vg_{1-v}(x)+(1-v)g_v(x)],$$
which imply that $C=\langle g(x)\rangle$.
Note that there exist polynomials $r_{1-v}(x)$ and $r_v(x)$ in $F_p[x]$ such that
$$
x^n-(\lambda+\mu)=g_{1-v}(x)r_{1-v}(x),
$$
$$
x^n-\lambda=g_{v}(x)r_{v}(x).
$$
Then we get
$$
x^n-\theta=g(x)[vr_{1-v}(x)+(1-v)r_v(x)].
$$

Finally, we prove the uniqueness of such a polynomial.
Suppose that $C=\langle h(x)\rangle$. Write $h(x)=v h(x)^{\sigma}+(1-v)h(x)^{\tau}$,
where $h(x)^{\sigma}$ and $h(x)^{\tau}$ are monic in $F_p[x]$.
In the following we shall prove that $h(x)^{\sigma}=g_{1-v}(x)$
and $h(x)^{\tau}=g_v(x)$.
Since $C=\langle h(x)\rangle$ and  $vh(x)\in C$, so $h(x)\in (C:v)$, that is, $h(x)^{\sigma}\in (C:v)^{\sigma}=C_{1-v}$.
Then $g_{1-v}(x)|h(x)^{\sigma}$; similarly, we have that $g_{v}(x)|h(x)^{\tau}$.
On the other hand, there exists some polynomial $s(x)\in R_n$ such that
$$v g_{1-v}(x)+(1-v)g_v(x)=[v s(x)^{\sigma}+(1-v)s(x)^{\tau}][v h(x)^{\sigma}+(1-v)h(x)^{\tau}],$$
\small{it follows that $s(x)^{\sigma}h(x)^{\sigma}=g_{1-v}(x)$ and $s(x)^{\tau}h(x)^{\tau}=g_v(x)$.
Hence $h(x)^{\sigma}\,|\,g_{1-v}(x)$} and  $h(x)^{\tau}\,|\,g_v(x)$.
Therefore we obtain that $h(x)^{\sigma}=g_{1-v}(x)$
and $h(x)^{\tau}=g_v(x)$, which is the required results.
\end{proof}

\begin{Remark}
We mention that,  the approach used in the above results  also valid when $p=2$.
In particular, Corollary~\ref{principalideal} yields a generalization of a main result in
\cite{ZWS}, which showed that cyclic codes over $F_2+vF_2$ are principally generated.
In the following, we shall see that the hypothesis with $p$ being an odd prime is necessary
on the discussion of $(1-2v)$ or $(-1+2v)$-constacyclic codes over $R$.
\end{Remark}

Now we give the definition of polynomial Gray map over $R_n$.
Let $f(x)\in R_n$ with degree less than $n$, then $f(x)$ can be expressed as
$f(x)=r(x)+vq(x)$, where $r(x), q(x)\in F_p[x]$ and their degrees are both less than
$n$. Let $\theta = \lambda+v \mu\in R^*$. Define the polynomial Gray map as follows:
$$\phi_{\theta}: R_n\longrightarrow F_p[x]/\langle x^{2n}-1\rangle$$
$$f(x)=r(x)+vq(x) \longmapsto \lambda(\lambda+\mu)q(x)+x^n[-\mu r(x)-(\lambda+\mu)q(x)].$$
Obviously the above polynomial Gray map $\phi_{\theta}$ is well-defined.
If  $\mu\neq 0$, then the map $\phi_{\theta}$ is bijective.
\begin{Theorem}\label{grayimage}
Let $C$ be a $\theta$-constacyclic code of length $n$ over $R$ with a generating set
in standard form $\{v g_{1-v}(x),(1-v)g_v(x)\}$.
Then $$\phi_{\theta}(C)\subseteq \langle g_{1-v}(x)g_v(x)\rangle.$$
\end{Theorem}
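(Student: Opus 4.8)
The plan is to reduce the inclusion to one polynomial identity. Let $f(x)\in C$ and take its unique representative of degree less than $n$, written $f(x)=vf(x)^{\sigma}+(1-v)f(x)^{\tau}$ with $f(x)^{\sigma},f(x)^{\tau}\in F_p[x]$. Since $C=vC_{1-v}\oplus(1-v)C_v$, we have $f(x)^{\sigma}\in C_{1-v}$ and $f(x)^{\tau}\in C_v$, and by Theorems~\ref{theorem-2} and~\ref{theorem-3} these codes are $[g_{1-v}(x)]$ and $[g_v(x)]$. Hence $f(x)^{\sigma}=g_{1-v}(x)a(x)$ and $f(x)^{\tau}=g_v(x)b(x)$ for suitable $a(x),b(x)\in F_p[x]$, where we may arrange that both products have degree less than $n$, so that $\phi_{\theta}$ applies to $f$ exactly as defined. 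Also record the factorisations $x^n-(\lambda+\mu)=g_{1-v}(x)r_{1-v}(x)$ and $x^n-\lambda=g_v(x)r_v(x)$ with $r_{1-v}(x),r_v(x)\in F_p[x]$, furnished by the standard form as in Corollary~\ref{principalideal}.

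The key step is to rewrite $\phi_{\theta}$ in terms of $f(x)^{\sigma}$ and $f(x)^{\tau}$. Putting $f(x)=r(x)+vq(x)$ gives $r(x)=f(x)^{\tau}$ and $q(x)=f(x)^{\sigma}-f(x)^{\tau}$; substituting into the definition of $\phi_{\theta}$ and collecting terms (the $\mu$-contributions to the coefficient of $x^nf(x)^{\tau}$ cancel) yields
$$\phi_{\theta}(f)=-(\lambda+\mu)\,f(x)^{\sigma}\,(x^n-\lambda)+\lambda\,f(x)^{\tau}\,(x^n-(\lambda+\mu)).$$
Now substitute $f(x)^{\sigma}=g_{1-v}(x)a(x)$, $f(x)^{\tau}=g_v(x)b(x)$, $x^n-\lambda=g_v(x)r_v(x)$, and $x^n-(\lambda+\mu)=g_{1-v}(x)r_{1-v}(x)$ into this identity; each of the two summands then acquires $g_{1-v}(x)$ from one factor and $g_v(x)$ from the other, so that
$$\phi_{\theta}(f)=g_{1-v}(x)g_v(x)\big[-(\lambda+\mu)a(x)r_v(x)+\lambda\,b(x)r_{1-v}(x)\big],$$
a genuine polynomial multiple of $g_{1-v}(x)g_v(x)$ of degree less than $2n$. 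Reducing modulo $x^{2n}-1$ shows $\phi_{\theta}(f)\in\langle g_{1-v}(x)g_v(x)\rangle$; since $f\in C$ was arbitrary, $\phi_{\theta}(C)\subseteq\langle g_{1-v}(x)g_v(x)\rangle$.

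I do not expect a genuine obstacle: the proof is the verification of the displayed identity followed by a substitution. The two points requiring a little care are (i) working with the degree-less-than-$n$ representative of $f$, so that the defining formula for $\phi_{\theta}$ applies verbatim rather than having first to check that $\phi_{\theta}$ is compatible with reduction modulo $x^n-\theta$, and (ii) the boundary cases of the standard form in which one of $C_{1-v},C_v$ is trivial, which are handled directly by the same identity. One may equally use that $\phi_{\theta}$ is additive and $F_p$-linear, so that it suffices to check the claim on the spanning elements $vg_{1-v}(x)x^i$ and $(1-v)g_v(x)x^j$ of $C$; these are the specialisations $f(x)^{\tau}=0$ and $f(x)^{\sigma}=0$ of the computation above.
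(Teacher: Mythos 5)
Your proof is correct and follows essentially the same route as the paper: both arguments reduce a codeword to its $\sigma$- and $\tau$-components (multiples of $g_{1-v}(x)$ and $g_v(x)$ respectively), establish the identity $\phi_{\theta}(f)=-(\lambda+\mu)f(x)^{\sigma}(x^n-\lambda)+\lambda f(x)^{\tau}\bigl(x^n-(\lambda+\mu)\bigr)$, and substitute the divisibility relations from the standard form to extract the factor $g_{1-v}(x)g_v(x)$. The only cosmetic difference is that the paper starts from the principal generator $vg_{1-v}(x)+(1-v)g_v(x)$ of Corollary~\ref{principalideal} and writes $f(x)$ as a multiple of it, whereas you invoke the decomposition $C=vC_{1-v}\oplus(1-v)C_v$ directly; the resulting computation is identical.
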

\begin{proof}
Since $g_{1-v}(x)|(x^n-(\lambda+\mu))$ and $g_v(x)|(x^n-\lambda)$,
there exists $q_1(x), q_2(x)\in F_p[x]$ such that
$$x^n-(\lambda+\mu)=g_{1-v}(x)q_1(x) \,\,\, \text{and}\,\,\, x^n-\lambda=g_v(x)q_2(x).$$
By the proof of Corollary \ref{principalideal}, we have that $C=\langle vg_{1-v}(x)+(1-v)g_v(x)\rangle$.
Let $f(x)$ be any element in $C$. Then $f(x)=[vg_{1-v}(x)+(1-v)g_v(x)]h(x)$, for some $h(x)\in R_n$.
Since $h(x)$ can be written as $h(x)=vh(x)^{\sigma}+(1-v)h(x)^{\tau}$, where $h(x)^{\sigma}, h(x)^{\tau}\in F_p[x]$,
it follows that $f(x)=g_v(x)h(x)^{\tau}+v[g_{1-v}(x)h(x)^{\sigma}-g_v(x)h(x)^{\tau}]$. Then we have that
\begin{eqnarray*}
\phi_{\theta}(f(x)) & = & \lambda(\lambda+\mu)[g_{1-v}(x)h(x)^{\sigma}-g_v(x)h(x)^{\tau}]\\
           & + & x^n[-\mu g_v(x)h(x)^{\tau}-(\lambda+\mu)(g_{1-v}(x)h(x)^{\sigma}-g_v(x)h(x)^{\tau})]\\
           & = & \lambda g_v(x)h(x)^{\tau}(x^n-(\lambda+\mu))-(\lambda+\mu)g_{1-v}(x)h(x)^{\sigma}(x^n-\lambda)\\
           & =& \lambda g_v(x)h(x)^{\tau}g_{1-v}(x)q_1(x)-(\lambda+\mu)g_{1-v}(x)h(x)^{\sigma}g_v(x)q_2(x)\\
           & = & g_{1-v}(x)g_v(x)[\lambda h(x)^{\tau}q_1(x)-(\lambda+\mu)h(x)^{\sigma}q_2(x)].
\end{eqnarray*}
Hence $\phi_{\theta}(C)\subseteq \langle g_{1-v}(x)g_v(x)\rangle$.
\end{proof}
\begin{Corollary}\label{grayimage-2}
Let $\vartheta=1-2v \,\,\,\text{or}\,\,\, -1+2v$ and let
$C$ be a $\vartheta$-constacyclic code of length $n$ over $R$ with a generating set
in standard form $\{v g_{1-v}(x),(1-v)g_v(x)\}$.
Then $\phi_{\vartheta}(C)=[g_{1-v}(x)g_v(x)]$.
\end{Corollary}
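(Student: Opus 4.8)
The plan is to specialize Theorem \ref{grayimage} to the case $\theta = \vartheta$, where $\vartheta = 1 - 2v$ or $\vartheta = -1 + 2v$, and then upgrade the inclusion $\phi_{\vartheta}(C)\subseteq [g_{1-v}(x)g_v(x)]$ to an equality by a counting argument. First I would record the values of $\lambda$ and $\mu$ in each case: for $\vartheta = 1 - 2v$ we have $\lambda = 1$ and $\mu = -2$, so $\lambda + \mu = -1$; for $\vartheta = -1 + 2v$ we have $\lambda = -1$ and $\mu = 2$, so $\lambda + \mu = 1$. In either case $\mu \neq 0$ (here we use crucially that $p$ is odd, so $2 \neq 0$ in $F_p$), hence by the discussion preceding Theorem \ref{grayimage} the polynomial Gray map $\phi_{\vartheta}$ is bijective on $R_n$. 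Moreover in both cases $\{\lambda, \lambda+\mu\} = \{1, -1\}$, so $x^n - \lambda$ and $x^n - (\lambda+\mu)$ are (in some order) $x^n - 1$ and $x^n + 1$, whose product is $x^{2n} - 1$; this is the reason the target ring $F_p[x]/\langle x^{2n}-1\rangle$ is the natural home for the image.

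Next I would invoke Theorem \ref{grayimage} to get $\phi_{\vartheta}(C)\subseteq \langle g_{1-v}(x)g_v(x)\rangle = [g_{1-v}(x)g_v(x)]$, the last equality holding because $g_{1-v}(x)g_v(x)$ divides $(x^n - (\lambda+\mu))(x^n-\lambda) = x^{2n}-1$ and is monic, so it is a generator polynomial in the sense defined in Section 2. It then remains to compare cardinalities. On one hand, since $\phi_{\vartheta}$ is a bijection, $|\phi_{\vartheta}(C)| = |C|$, and by Proposition \ref{equation} this equals $p^{2n - \deg g_{1-v}(x) - \deg g_v(x)}$. On the other hand, the cyclic code $[g_{1-v}(x)g_v(x)]$ of length $2n$ over $F_p$ has $p^{2n - \deg(g_{1-v}(x)g_v(x))} = p^{2n - \deg g_{1-v}(x) - \deg g_v(x)}$ codewords. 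The two cardinalities agree, and since $\phi_{\vartheta}(C)$ is contained in $[g_{1-v}(x)g_v(x)]$, they must be equal.

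The one point that needs genuine care — and I expect it to be the main obstacle — is justifying that $\phi_{\vartheta}(C)$ is actually an $F_p$-linear subspace (equivalently, a subset closed under addition) whose size can be read off as $|C|$: the map $\phi_{\theta}$ is defined as an additive map $R_n \to F_p[x]/\langle x^{2n}-1\rangle$, so $\phi_{\vartheta}$ restricted to the $R$-submodule $C$ is additive and $F_p$-linear, hence its image is an $F_p$-subspace of the $2n$-dimensional space $F_p[x]/\langle x^{2n}-1\rangle$; injectivity of $\phi_{\vartheta}$ (valid since $\mu \neq 0$) then gives $|\phi_{\vartheta}(C)| = |C|$. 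Once this bookkeeping is in place, the equality $\phi_{\vartheta}(C) = [g_{1-v}(x)g_v(x)]$ follows immediately from the inclusion plus the matching counts. I would also remark, echoing the Remark after Corollary \ref{principalideal}, that the hypothesis that $p$ is odd is exactly what makes $\mu = \pm 2$ nonzero and hence $\phi_{\vartheta}$ bijective; for $p = 2$ the codes $1 - 2v$ and $-1+2v$ both collapse to the cyclic case and the statement as phrased is vacuous or degenerate.
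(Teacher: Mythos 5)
Your proposal is correct and follows essentially the same route as the paper: apply Theorem \ref{grayimage} to get the inclusion $\phi_{\vartheta}(C)\subseteq [g_{1-v}(x)g_v(x)]$, note that $(x^n-(\lambda+\mu))(x^n-\lambda)=x^{2n}-1$ so that $g_{1-v}(x)g_v(x)$ is a genuine generator polynomial, and then match cardinalities via Proposition \ref{equation}. The only difference is that you spell out the justification for $|\phi_{\vartheta}(C)|=|C|$ (bijectivity of $\phi_{\vartheta}$ when $\mu\neq 0$, which requires $p$ odd), a point the paper uses implicitly.
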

\begin{proof}
Note that $g_{1-v}(x)|(x^n-(\lambda+\mu))$ and $g_v(x)|(x^n-\lambda)$,
where $\lambda+v\mu=1-2v \,\,\,\text{or}\,\,\, -1+2v$,
then $(x^n-(\lambda+\mu))(x^n-\lambda)=x^{2n}-1$,
hence $g_{1-v}(x)g_v(x)|(x^{2n}-1)$, which shows that $g_{1-v}(x)g_v(x)$ is the generator polynomial
for cyclic code $\langle g_{1-v}(x)g_v(x)\rangle$, that is, $\langle g_{1-v}(x)g_v(x)\rangle=[g_{1-v}(x)g_v(x)]$.

By Theorem \ref{grayimage}, we have that $\phi_{\vartheta}(C)\subseteq [g_{1-v}(x)g_v(x)]$.
On the other hand,
$$|\phi_{\vartheta}(C)|=|C|=p^{2n-{\rm deg}(g_{1-v}(x))-{\rm deg}(g_v(x))}$$
and
$$|[ g_{1-v}(x)g_v(x)]|= p^{2n-{\rm deg}(g_{1-v}(x))-{\rm deg}(g_v(x))}.$$
Hence, $\phi_{\vartheta}(C)=[ g_{1-v}(x)g_v(x)]$.
\end{proof}

The above Corollary \ref{grayimage-2} shows that the Gray image of a $\vartheta$-constacyclic code
over $R$ under the Gray map $\phi_{\vartheta}$ is a cyclic code of length $2n$ over $F_p$.
In order to study the converse part, we now give the corresponding Gray map on $R^n$.
Let $\theta=\lambda+v\mu$,
$$\phi_{\theta}: R^n\longrightarrow F_p^{2n}$$
$$(c_0,c_1,\cdots,c_{n-1})\longmapsto \big(\lambda(\lambda+\mu)q_0,\lambda(\lambda+\mu)q_1,\cdots,\lambda(\lambda+\mu)q_{n-1},$$
$$-\mu r_0-(\lambda+\mu)q_0,-\mu r_1-(\lambda+\mu)q_1,\cdots,-\mu r_{n-1}-(\lambda+\mu)q_{n-1}\big).$$
where $c_i=r_i+vq_i, 0\leq i\leq n-1$.
\begin{lem}\label{lem-2}
Let $\vartheta=1-2v \,\,\,\text{or}\,\,\, -1+2v$ and $\phi_{\vartheta}$ be the Gray map of $R^n$ into $F_p^{2n}$.
Let $\alpha$ be the $\vartheta$-constacyclic shift of $R^n$ and $\beta$ the cyclic shift of $F_p^{2n}$.
Then $\phi_{\vartheta}\alpha=\beta\phi_{\vartheta}$.
\end{lem}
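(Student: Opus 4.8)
The plan is to verify $\phi_\vartheta\alpha=\beta\phi_\vartheta$ by a direct coordinate-wise computation, evaluating both sides on an arbitrary vector $c=(c_0,c_1,\ldots,c_{n-1})\in R^n$ and comparing entry by entry. First I would write $c_i=r_i+vq_i$ with $r_i,q_i\in F_p$, and split into the two cases $\vartheta=1-2v$ (where $\lambda=1$, $\mu=-2$, $\lambda+\mu=-1$) and $\vartheta=-1+2v$ (where $\lambda=-1$, $\mu=2$, $\lambda+\mu=1$). In both cases one has the scalar identities $\lambda(\lambda+\mu)=-1$ and $2\lambda+\mu=0$, and these are exactly what the verification will consume.

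To compute $\phi_\vartheta(\alpha(c))$, the first step is to decompose the ``wrap-around'' entry into its $F_p$-components: using $v^2=v$ one finds $\vartheta c_{n-1}=\lambda r_{n-1}+v\big(\mu r_{n-1}+(\lambda+\mu)q_{n-1}\big)$. Hence $\alpha(c)=(\vartheta c_{n-1},c_0,\ldots,c_{n-2})$ has $r$-parts $(\lambda r_{n-1},r_0,\ldots,r_{n-2})$ and $q$-parts $\big(\mu r_{n-1}+(\lambda+\mu)q_{n-1},q_0,\ldots,q_{n-2}\big)$, and applying the definition of $\phi_\vartheta$ to this vector produces an explicit element of $F_p^{2n}$.

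For the other side, I would first apply $\phi_\vartheta$ to $c$ — whose last coordinate is $-\mu r_{n-1}-(\lambda+\mu)q_{n-1}$ — and then apply $\beta$, which moves that last coordinate to the front. The ``bulk'' coordinates (positions $2$ through $n$ and $n+2$ through $2n$) agree immediately on both sides. The only coordinates that require checking are the new first coordinate and the new $(n+1)$st coordinate; there the equality reduces, after expanding, to the scalar identities $\mu\big(\lambda(\lambda+\mu)+1\big)=0$, $(\lambda+\mu)\big(\lambda(\lambda+\mu)+1\big)=0$, $\mu(2\lambda+\mu)=0$ and $(\lambda+\mu)(2\lambda+\mu)=0$, all of which hold for $\vartheta=\pm(1-2v)$ by the remarks above. (Conversely, these are the only units $\theta$ of $R$ for which the identity holds, which explains the restriction in the hypothesis.) The case $\vartheta=-1+2v$ is handled by the same computation with the corresponding constants.

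I do not anticipate a genuine obstacle: the statement is purely computational. The only thing to be careful about is bookkeeping — keeping track of which of the two length-$n$ blocks of $F_p^{2n}$ each entry falls into after the cyclic shift $\beta$, and making sure that $\vartheta c_{n-1}$ is split into its $F_p$-parts \emph{before} $\phi_\vartheta$ is applied rather than after.
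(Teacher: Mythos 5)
Your proposal is correct and follows essentially the same route as the paper: a direct coordinate-by-coordinate verification that $\phi_\vartheta\alpha$ and $\beta\phi_\vartheta$ agree on an arbitrary $c\in R^n$, after first splitting $\vartheta c_{n-1}$ into its $F_p$-components. The only difference is presentational: you carry general $\lambda,\mu$ through the computation and isolate the scalar identities $\lambda(\lambda+\mu)=-1$ and $2\lambda+\mu=0$ that make it work, whereas the paper substitutes the numerical values for $\vartheta=-1+2v$ and handles $\vartheta=1-2v$ by citing a proposition of Zhu and Wang.
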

\begin{proof}
We first consider the case when $\vartheta=-1+2v$. Let $c=(c_0,c_1,\cdots,c_{n-1})\in R^n$,
where $c_i=r_i+vq_i, 0\leq i\leq n-1$. By the definition of $\phi_{\vartheta}$,
we have that
$$\phi_{\vartheta}(c)=(-q_0,-q_1,\cdots,-q_{n-1},-2r_0-q_0,-2r_1-q_1,\cdots,-2r_{n-1}-q_{n-1}).$$
Hence
$$\beta(\phi_{\vartheta}(c))=(-2r_{n-1}-q_{n-1},-q_0,\cdots,-q_{n-1},-2r_0-q_0,\cdots,-2r_{n-2}-q_{n-2}).$$
On the other hand,
\begin{eqnarray*}
\alpha(c) & = & (\vartheta c_{n-1}, c_0, \cdots,c_{n-2})\\
          & = & (-r_{n-1}+v(2r_{n-1}+q_{n-1}), r_0+v q_0, \cdots, r_{n-2}+v q_{n-2}).
\end{eqnarray*}
So we obtain
$$\phi_{\vartheta}(\alpha(c))=(-2r_{n-1}-q_{n-1},-q_0,\cdots,-q_{n-1},-2r_0-q_0,\cdots,-2r_{n-2}-q_{n-2}).$$
Therefore, $\phi_{\vartheta}\alpha=\beta\phi_{\vartheta}$.

The case when $\vartheta=1-2v$ is [\cite{ZW}, Proposition 3.3].
\end{proof}
\begin{Theorem}
Let $\vartheta=1-2v \,\,\,\text{or}\,\,\, -1+2v$.
A linear code $C$ of length $n$ over $R$ is a $\vartheta$-constacyclic code if and only if
$\phi_{\vartheta}(C)$ is a cyclic code of length $2n$ over $F_p$.
\end{Theorem}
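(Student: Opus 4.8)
The plan is to reduce the statement entirely to the intertwining relation $\phi_{\vartheta}\alpha=\beta\phi_{\vartheta}$ from Lemma~\ref{lem-2}, combined with the fact that $\phi_{\vartheta}$ is an $F_p$-linear bijection from $R^n$ onto $F_p^{2n}$. For $\vartheta=1-2v$ one has $\lambda=1$, $\mu=-2$, and for $\vartheta=-1+2v$ one has $\lambda=-1$, $\mu=2$; in either case $\mu\neq 0$ in $F_p$ because $p$ is odd, so, by the same argument used for the polynomial Gray map (from $\lambda(\lambda+\mu)q_i=0$ deduce $q_i=0$ since $\lambda\ne 0$, $\lambda+\mu\ne 0$, and then $-\mu r_i=0$ forces $r_i=0$ since $\mu\neq 0$), the map $\phi_{\vartheta}$ on $R^n$ is injective, hence bijective by counting. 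That $\phi_{\vartheta}$ is additive and $F_p$-linear is immediate from its defining formula, so $\phi_{\vartheta}$ carries $F_p$-linear subcodes of $R^n$ bijectively onto $F_p$-linear subcodes of $F_p^{2n}$; in particular $\phi_{\vartheta}(C)$ is a linear code of length $2n$ over $F_p$ whenever $C$ is linear over $R$.

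For the forward implication, I would assume $C$ is $\vartheta$-constacyclic, i.e.\ $\alpha(C)\subseteq C$, where $\alpha$ is the $\vartheta$-constacyclic shift. Applying $\phi_{\vartheta}$ and invoking Lemma~\ref{lem-2},
$$\beta\bigl(\phi_{\vartheta}(C)\bigr)=\phi_{\vartheta}\bigl(\alpha(C)\bigr)\subseteq\phi_{\vartheta}(C).$$
Since $\phi_{\vartheta}(C)$ is a linear code invariant under the cyclic shift $\beta$, it is a cyclic code of length $2n$ over $F_p$.

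For the converse, I would assume $\phi_{\vartheta}(C)$ is cyclic, so $\beta\bigl(\phi_{\vartheta}(C)\bigr)\subseteq\phi_{\vartheta}(C)$. Take any $c\in C$; then by Lemma~\ref{lem-2},
$$\phi_{\vartheta}\bigl(\alpha(c)\bigr)=\beta\bigl(\phi_{\vartheta}(c)\bigr)\in\beta\bigl(\phi_{\vartheta}(C)\bigr)\subseteq\phi_{\vartheta}(C).$$
Because $\phi_{\vartheta}$ is injective, $\phi_{\vartheta}(\alpha(c))\in\phi_{\vartheta}(C)$ forces $\alpha(c)\in C$. As $c$ was arbitrary, $\alpha(C)\subseteq C$, so $C$ is a $\vartheta$-constacyclic code.

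There is no genuine obstacle: the combinatorial content has already been absorbed into Lemma~\ref{lem-2}. The only point deserving explicit mention is the bijectivity (hence injectivity) of $\phi_{\vartheta}$ used in the converse direction, which is precisely where the hypothesis that $p$ is odd is needed — namely, one must observe that $\mu\in\{2,-2\}$ is nonzero in $F_p$, as already flagged in the Remark preceding the definition of the polynomial Gray map.
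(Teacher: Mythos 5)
Your proof is correct and follows exactly the route the paper intends: the paper's own proof is the one-line ``It is obtained by Lemma \ref{lem-2},'' and your argument is the standard unpacking of that, using the intertwining relation $\phi_{\vartheta}\alpha=\beta\phi_{\vartheta}$ for the forward direction and additionally the injectivity of $\phi_{\vartheta}$ (valid since $\mu=\pm 2\neq 0$ for odd $p$) for the converse. You have merely made explicit the linearity and bijectivity details that the paper leaves implicit, which is a faithful and complete rendering of the same approach.
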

\begin{proof}
It is obtained by Lemma \ref{lem-2}.
\end{proof}

The following result is routine to check or see the proof of [\cite{DI}, Proposition 2.4].
\begin{Proposition}\label{thedual}
Let $C$ be a $\theta$-constacyclic code of length $n$ over $R$.
Then the dual code $C^\perp$ for $C$ is a $\theta^{-1}$-constacyclic code of length $n$ over $R$.
\end{Proposition}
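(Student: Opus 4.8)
The plan is to argue directly with codewords and the Euclidean inner product, showing that the $\theta^{-1}$-constacyclic shift maps $C^\perp$ into itself; recall from the preliminaries that $C^\perp$ is automatically linear, so this is all that is needed. Write $\alpha$ for the $\theta$-constacyclic shift on $R^n$, that is, $\alpha(w_0,w_1,\cdots,w_{n-1})=(\theta w_{n-1},w_0,\cdots,w_{n-2})$, so that the hypothesis ``$C$ is $\theta$-constacyclic'' reads $\alpha(C)\subseteq C$. A routine induction on $k$ gives $\alpha^{k}(w)=(\theta w_{n-k},\cdots,\theta w_{n-1},w_0,\cdots,w_{n-k-1})$ for $1\le k\le n$, and in particular $\alpha^{n}(w)=\theta w$. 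Hence for $w=(w_0,\cdots,w_{n-1})\in C$ we have $\alpha^{n-1}(w)\in C$, and since $C$ is an $R$-submodule and $\theta^{-1}\in R$ (here $\theta$ is a unit by hypothesis, or by Lemma~3.1), the element $w'=\theta^{-1}\alpha^{n-1}(w)=(w_1,w_2,\cdots,w_{n-1},\theta^{-1}w_0)$ also lies in $C$.

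Now let $u=(u_0,\cdots,u_{n-1})$ be an arbitrary element of $C^\perp$ and let $u^{\ast}=(\theta^{-1}u_{n-1},u_0,u_1,\cdots,u_{n-2})$ be its $\theta^{-1}$-constacyclic shift. For any $w\in C$ a direct expansion of both inner products gives
\[
u^{\ast}\cdot w=\theta^{-1}u_{n-1}w_0+u_0w_1+u_1w_2+\cdots+u_{n-2}w_{n-1}=u\cdot w',
\]
with $w'\in C$ the element constructed above. Since $u\in C^\perp$ and $w'\in C$, the right-hand side vanishes; as $w$ was arbitrary in $C$, we conclude $u^{\ast}\in C^\perp$. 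Thus $C^\perp$ is stable under the $\theta^{-1}$-constacyclic shift, i.e. $C^\perp$ is a $\theta^{-1}$-constacyclic code of length $n$ over $R$.

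The only step that is not purely mechanical is the bookkeeping: one must notice that the $\theta^{-1}$-shift of $u$ should be paired not with $w$ itself but with the ``inverse shift'' $w'=\theta^{-1}\alpha^{n-1}(w)$, which belongs to $C$ precisely because $\alpha^{n}=\theta\cdot\mathrm{id}$ and $C$ is an $R$-module closed under $\alpha$; once this pairing is identified, the equality $u^{\ast}\cdot w=u\cdot w'$ is immediate from writing out both sides. An essentially equivalent alternative is to observe that $\alpha$ restricts to a bijection of the finite set $C$, so $\alpha^{-1}(C)=C$ with $\alpha^{-1}(w)=(w_1,\cdots,w_{n-1},\theta^{-1}w_0)$; one could likewise recast everything inside $R[x]/\langle x^{n}-\theta\rangle$ via a reciprocal-polynomial map, but the coordinate computation above is the most economical route.
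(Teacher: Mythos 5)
Your proof is correct and is exactly the ``routine check'' that the paper itself omits (it merely cites [DI, Proposition~2.4]): the key identity $u^{\ast}\cdot w=u\cdot w'$ with $w'=\theta^{-1}\alpha^{n-1}(w)\in C$ is the standard argument, and your use of $\alpha^{n}=\theta\cdot\mathrm{id}$ together with the linearity of $C^\perp$ closes all the gaps. Nothing further is needed.
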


Let $g_{1-v}(x)h_{1-v}(x)=x^n-(\lambda+\mu), g_v(x)h_v(x)=x^n-\lambda$.
Let $\widetilde{h}_{1-v}(x)=x^{{\rm deg}(h_{1-v}(x))}h_{1-v}(\frac{1}{x})$ and
$\widetilde{h}_v(x)=x^{{\rm deg}(h_v(x))}h_v(\frac{1}{x})$ be the reciprocal polynomials of $h_{1-v}(x)$ and $h_v(x)$, respectively.
We write $h_{1-v}^*(x)=\frac{1}{h_{1-v}(0)}\widetilde{h}_{1-v}(x)$ and $ h_v^*(x)=\frac{1}{h_v(0)}\widetilde{h}_v(x)$.

\begin{Theorem}\label{dualdecomposition}
Let $C=vC_{1-v}\oplus(1-v)C_v$ be a $\theta$-constacyclic code of length $n$ over $R$.
Then $C^\perp=vC^\perp_{1-v}\oplus(1-v)C^\perp_v$.
\end{Theorem}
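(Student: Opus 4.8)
The plan is to establish the stronger set-theoretic identity
$$C^\perp \;=\; v\,(C_{1-v})^\perp \;\oplus\; (1-v)\,(C_v)^\perp ,$$
and then invoke the uniqueness of the $v/(1-v)$ decomposition recalled in Section~2: since every linear code $D$ of length $n$ over $R$ has the unique expression $D = vD_{1-v}\oplus(1-v)D_v$, the displayed identity forces $(C^\perp)_{1-v} = (C_{1-v})^\perp$ and $(C^\perp)_v = (C_v)^\perp$, which is exactly the assertion of the theorem (so that the symbols $C^\perp_{1-v}$, $C^\perp_v$ in the statement may be read as $(C_{1-v})^\perp$, $(C_v)^\perp$).

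First I would prove the inclusion $v\,(C_{1-v})^\perp \oplus (1-v)\,(C_v)^\perp \subseteq C^\perp$. Take $a \in (C_{1-v})^\perp$ and $b \in (C_v)^\perp$; I must check that $va+(1-v)b$ is orthogonal to every codeword of $C$. Write an arbitrary $c \in C$ as $c = vr + (1-v)q$ with $r \in C_{1-v}$ and $q \in C_v$, which is possible precisely because $C = vC_{1-v}\oplus(1-v)C_v$. Using the idempotent relations $v^2 = v$, $(1-v)^2 = 1-v$ and $v(1-v)=0$, a coordinatewise computation gives
$$(va+(1-v)b)\cdot(vr+(1-v)q) \;=\; v\,(a\cdot r) \;+\; (1-v)\,(b\cdot q).$$
Since $a\cdot r = 0$ and $b\cdot q = 0$, the right-hand side is $0$, so $va+(1-v)b \in C^\perp$, as desired.

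Next I would upgrade this inclusion to an equality by a cardinality count. Put $k_v = \dim_{F_p} C_v$ and $k_{1-v} = \dim_{F_p} C_{1-v}$. From Section~2, $|C| = |C_v|\,|C_{1-v}| = p^{\,k_v + k_{1-v}}$, and, $R$ being Frobenius, $|C^\perp| = |R|^n/|C| = p^{\,2n - k_v - k_{1-v}}$. On the other hand, by Euclidean duality over $F_p$ we have $|(C_{1-v})^\perp| = p^{\,n-k_{1-v}}$ and $|(C_v)^\perp| = p^{\,n-k_v}$, so $|v\,(C_{1-v})^\perp \oplus (1-v)\,(C_v)^\perp| = p^{\,n-k_{1-v}}\cdot p^{\,n-k_v} = p^{\,2n - k_v - k_{1-v}}$. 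The two orders coincide, so the inclusion already proved is an equality, and the theorem follows.

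The computations here are routine; the only points needing a little care are checking that the cross terms $v(1-v)$ really vanish coordinatewise in the inner-product identity, and correctly passing from the global identity to the componentwise statement via uniqueness of the decomposition. I do not anticipate a genuine obstacle. One may additionally remark, combining Theorem~\ref{theorem-1} with Proposition~\ref{thedual}, that $(C_{1-v})^\perp$ and $(C_v)^\perp$ are respectively $(\lambda+\mu)^{-1}$- and $\lambda^{-1}$-constacyclic over $F_p$, consistently with $C^\perp$ being $\theta^{-1}$-constacyclic, although this is not needed for the decomposition itself.
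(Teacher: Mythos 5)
Your proposal is correct and follows essentially the same route as the paper's proof: first the inclusion $v(C_{1-v})^\perp\oplus(1-v)(C_v)^\perp\subseteq C^\perp$ via the idempotent inner-product computation, then equality by comparing cardinalities using $|C|=|C_v||C_{1-v}|$ and the Frobenius identity $|C||C^\perp|=|R|^n$. The only cosmetic difference is that you count with dimensions $k_v,k_{1-v}$ where the paper uses degrees of the generator polynomials, which is the same bookkeeping.
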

\begin{proof}
From Theorem \ref{theorem-1}, $C_{1-v}$ and $C_v$ are constacyclic codes over $F_p$,
then $C_{1-v}^\perp$ and $C^\perp_v$ are also constacyclic codes $F_p$.
Let $g_{1-v}(x)$ and $g_v(x)$ are generator polynomials for $C_{1-v}$ and $C_v$, respectively.
Then $C_{1-v}^\perp=[h_{1-v}^*(x)]$ and  $C_v^\perp=[h_v^*(x)]$.
Thus we have that $|C_{1-v}^\perp|=p^{{\rm deg}(g_{1-v}(x))}$ and $|C_v^\perp|=p^{{\rm deg}(g_v(x))}$.

For any $a\in C^\perp_{1-v}, b\in C_v^\perp$ and $c=vr+(1-v)q\in C$,
where $r\in C_{1-v}, q\in C_v$, we have that
\begin{eqnarray*}
c\cdot(va+(1-v)b) & = & (v r+(1-v)q)\cdot(va+(1-v)b) \\
                  & = & v(r\cdot a)+(1-v)(q\cdot b) \\
                  & = & 0,
\end{eqnarray*}
and hence $vC^\perp_{1-v}\oplus(1-v)C^\perp_v \subseteq C^\perp$.

Furthermore, suppose that $va+(1-v)b=va'+(1-v)b'$, where $a, a' \in C_{1-v}^\perp$ and $ b, b' \in C_v^\perp$,
then $v(a-a')=(1-v)(b'-b)$, so
$$v(a-a')=v[v(a-a')]=v[(1-v)(b-b')]=0.$$
Hence $a=a'$, which forces $b=b'$.
Thus every element $c$ of $vC^\perp_{1-v}\oplus(1-v)C^\perp_v$ has a unique expression as
$c=vr+(1-v)q$, where $r \in C_{1-v}^\perp, q\in C_v^\perp$.
This shows that
\begin{eqnarray*}
|vC^\perp_{1-v}\oplus(1-v)C^\perp_v| & = & |C_{1-v}^\perp||C_v^\perp|\\
                                     & = & p^{{\rm deg}(g_{1-v}(x))+{\rm deg}(g_v(x))}.
\end{eqnarray*}

Finally, by Proposition \ref{equation}, $|C|=p^{2n-{\rm deg}(g_{1-v}(x))-{\rm deg}(g_v(x))}$.
Since $R$ is a Frobenius ring, $|C||C^\perp|=|R|^n$, so
\begin{eqnarray*}
|C^\perp| & = & \frac{|R|^n}{|C|}\\
          & = & \frac{p^{2n}}{p^{2n-{\rm deg}(g_{1-v}(x))-{\rm deg}(g_v(x))}}\\
          & = & p^{{\rm deg}(g_{1-v}(x))+{\rm deg}(g_v(x))}\\
          & = & |vC^\perp_{1-v}\oplus(1-v)C^\perp_v|.
\end{eqnarray*}
Note that $vC^\perp_{1-v}\oplus(1-v)C^\perp_v \subseteq C^\perp$ as above,
we have that $C^\perp =vC^\perp_{1-v}\oplus(1-v)C^\perp_v$, as required.
\end{proof}

According to the above results and their proofs, we can carry over the results regarding constacyclic
codes corresponding to their dual codes.
\begin{Theorem}
With notations as above. Let $C$ be a $\theta$-constacyclic code of length $n$ over $R$ with a generating set
in standard form $\{v g_{1-v}(x),(1-v)g_v(x)\}$. Then

(1) $C^\perp=\langle v h_{1-v}^*(x), (1-v)h_v^*(x)\rangle$ and $|C^\perp|=p^{{\rm deg}(g_{1-v}(x))+{\rm deg}(g_v(x))}$;

(2) $C^\perp=\langle v h_{1-v}^*(x)+(1-v)h_v^*(x)\rangle$;

(3) $\phi_{\theta}(C^\perp)\subseteq\langle h^*_{1-v}(x)h^*_v(x)\rangle$;
\end{Theorem}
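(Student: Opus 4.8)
The plan is to derive all three parts from the structural decomposition $C = vC_{1-v}\oplus(1-v)C_v$ together with Theorem~\ref{dualdecomposition}, Theorem~\ref{theorem-2}, Corollary~\ref{principalideal} and Theorem~\ref{grayimage}, so that almost nothing new has to be proved. By Theorem~\ref{theorem-1}, $C_v$ is $\lambda$-constacyclic and $C_{1-v}$ is $(\lambda+\mu)$-constacyclic over $F_p$, and since $\{vg_{1-v}(x),(1-v)g_v(x)\}$ is a generating set in standard form, $g_{1-v}(x)$ and $g_v(x)$ are exactly the generator polynomials of $C_{1-v}$ and $C_v$ (this is the content of Theorem~\ref{theorem-2} combined with the uniqueness in Theorem~\ref{generatorset}). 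The standard theory of constacyclic codes over a finite field tells us that the dual of a $\gamma$-constacyclic code with generator polynomial $g(x)$, where $g(x)h(x)=x^n-\gamma$, is the $\gamma^{-1}$-constacyclic code with generator polynomial $h^*(x)=h(0)^{-1}\widetilde h(x)$; applying this to both $C_v$ and $C_{1-v}$ gives $C_{1-v}^\perp=[h_{1-v}^*(x)]$ and $C_v^\perp=[h_v^*(x)]$, precisely as already recorded in the proof of Theorem~\ref{dualdecomposition}.

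For part~(1): by Theorem~\ref{dualdecomposition} we have $C^\perp = vC_{1-v}^\perp\oplus(1-v)C_v^\perp$, and by Proposition~\ref{thedual} this is a $\theta^{-1}$-constacyclic code, where one checks $\theta^{-1}=\lambda^{-1}+v\mu'$ for a suitable $\mu'$ so that the componentwise constant is $\lambda^{-1}$ for the $(1-v)$-part and $(\lambda+\mu)^{-1}$ for the $v$-part; then $h_{1-v}^*(x)\mid(x^n-(\lambda+\mu)^{-1})$ and $h_v^*(x)\mid(x^n-\lambda^{-1})$, so Theorem~\ref{theorem-2} applied to $C^\perp$ yields $C^\perp=\langle vh_{1-v}^*(x),(1-v)h_v^*(x)\rangle$. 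The cardinality statement $|C^\perp|=p^{\deg(g_{1-v}(x))+\deg(g_v(x))}$ is already computed at the end of the proof of Theorem~\ref{dualdecomposition} (using $\deg h_{1-v}^*=n-\deg g_{1-v}$, $\deg h_v^*=n-\deg g_v$ and Proposition~\ref{equation}), so I would just cite it. Part~(2) follows from part~(1) by the identical argument as in Corollary~\ref{principalideal}: setting $g^\perp(x)=vh_{1-v}^*(x)+(1-v)h_v^*(x)$, one uses $vh_{1-v}^*(x)=v\,g^\perp(x)$ and $(1-v)h_v^*(x)=(1-v)g^\perp(x)$ to see $\langle g^\perp(x)\rangle = \langle vh_{1-v}^*(x),(1-v)h_v^*(x)\rangle = C^\perp$. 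Part~(3) is immediate from Theorem~\ref{grayimage} applied to the $\theta^{-1}$-constacyclic code $C^\perp$ with its standard-form generating set $\{vh_{1-v}^*(x),(1-v)h_v^*(x)\}$, giving $\phi_\theta(C^\perp)\subseteq\langle h_{1-v}^*(x)h_v^*(x)\rangle$ — here one must be slightly careful that the ambient Gray map is still $\phi_\theta$ (built from $\lambda,\mu$), not $\phi_{\theta^{-1}}$, but inspection of the computation in the proof of Theorem~\ref{grayimage} shows the containment only uses that $g_{1-v}(x)g_v(x)$ divides the relevant polynomial, and the factorization identities $x^n-(\lambda+\mu)=h_{1-v}(x)g_{1-v}(x)$ and $x^n-\lambda=h_v(x)g_v(x)$ combined with the reciprocal relations still produce $h_{1-v}^*(x)h_v^*(x)$ as a common factor, so the same algebra goes through verbatim.

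The only genuine subtlety — and the step I expect to need the most care — is verifying in part~(1) that the "component constants" of $\theta^{-1}$ match up correctly, i.e. that $C^\perp$ being $\theta^{-1}$-constacyclic forces $C_v^\perp$ to be $\lambda^{-1}$-constacyclic and $C_{1-v}^\perp$ to be $(\lambda+\mu)^{-1}$-constacyclic, so that the divisibility hypotheses of Theorem~\ref{theorem-2} are literally satisfied by $h_v^*(x)$ and $h_{1-v}^*(x)$. This reduces to the elementary observation that under the ring isomorphism $R\cong F_p\oplus F_p$ given by $r\mapsto(r^\sigma,r^\tau)$, inversion is componentwise, and $\theta\mapsto(\lambda,\lambda+\mu)$, so $\theta^{-1}\mapsto(\lambda^{-1},(\lambda+\mu)^{-1})$; combined with the fact that for constacyclic codes over $F_p$ the dual of a $\gamma$-constacyclic code is $\gamma^{-1}$-constacyclic with generator polynomial $h^*$, everything is consistent. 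Once this bookkeeping is in place, all three parts are one-line deductions from the already-established theorems, so I would keep the written proof short and refer back to the proofs of Theorem~\ref{dualdecomposition}, Corollary~\ref{principalideal} and Theorem~\ref{grayimage} rather than repeating the calculations.
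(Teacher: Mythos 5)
Your treatment of parts (1) and (2) is exactly the paper's: decompose $C^\perp=vC^\perp_{1-v}\oplus(1-v)C^\perp_v$ via Theorem~\ref{dualdecomposition}, identify $h^*_{1-v}(x)$ and $h^*_v(x)$ as the generator polynomials of $C^\perp_{1-v}$ and $C^\perp_v$, read off the cardinality from $|C^\perp|=|C^\perp_{1-v}||C^\perp_v|$, and then repeat the two-line argument of Corollary~\ref{principalideal} to pass to the single generator. Your extra bookkeeping that $\theta^{-1}$ has components $(\lambda^{-1},(\lambda+\mu)^{-1})$, so that the divisibility hypotheses are genuinely met, is correct and is a point the paper leaves implicit.

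Part (3) is where your added justification does not hold up. You assert that the computation of Theorem~\ref{grayimage} ``goes through verbatim'' with the map $\phi_\theta$ built from $(\lambda,\mu)$. Redo that computation for $f(x)\in C^\perp$: writing $f=vf^\sigma+(1-v)f^\tau$ one gets
$$\phi_\theta(f)=-(\lambda+\mu)\,f^\sigma\,(x^n-\lambda)+\lambda\, f^\tau\,(x^n-(\lambda+\mu)),$$
with $h^*_{1-v}(x)\mid f^\sigma$ and $h^*_v(x)\mid f^\tau$. In the first term the factor $h^*_{1-v}$ is present, but the needed factor $h^*_v$ would have to come from $x^n-\lambda$, whereas $h^*_v(x)$ divides $x^n-\lambda^{-1}$; symmetrically for the second term. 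So the verbatim algebra fails unless $\lambda^2=(\lambda+\mu)^2=1$. The containment in (3) is really Theorem~\ref{grayimage} applied to $C^\perp$ \emph{as a $\theta^{-1}$-constacyclic code}, i.e.\ with the Gray map attached to $\theta^{-1}$, which yields
$$\phi_{\theta^{-1}}(f)=-(\lambda+\mu)^{-1}f^\sigma\,(x^n-\lambda^{-1})+\lambda^{-1}f^\tau\,(x^n-(\lambda+\mu)^{-1}),$$
where each term visibly contains $h^*_{1-v}(x)h^*_v(x)$. The paper's one-line ``similar to the proof of Theorem~\ref{grayimage}'' is best read this way (and in the only cases the paper subsequently uses, $\vartheta=\pm(1-2v)$, one has $\vartheta^{-1}=\vartheta$ so the distinction disappears). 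You correctly sensed the subtlety but resolved it in the wrong direction; either state (3) for $\phi_{\theta^{-1}}$ or restrict to self-inverse $\theta$.
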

\begin{proof}
(1) By Proposition \ref{thedual}, $C^\perp$ is a $\theta^{-1}$-constacyclic code over $R$;
by Theorem \ref{dualdecomposition}, we have that $C^\perp=vC^\perp_{1-v}\oplus(1-v)C^\perp_v$,
where according to Theorem \ref{theorem-1} $C_{1-v}^\perp$ and $C^\perp_v$ are two constacyclic codes
over $F_p$. Since $h_{1-v}^*(x)$ and $h^*_v(x)$ are generator polynomials for $C_{1-v}^\perp$ and $C^\perp_v$, respectively,
we have that $\{vh_{1-v}^*(x), (1-v)h^*_v(x)\}$ is the generating set in standard form for $C^\perp$.
So $C^\perp=\langle v h_{1-v}^*(x), (1-v)h_v^*(x)\rangle$.
In addition, $|C^\perp|=|C_{1-v}^\perp||C^\perp_v|=p^{{\rm deg}(g_{1-v}(x))}\cdot p^{{\rm deg}(g_v(x))}=p^{{\rm deg}(g_{1-v}(x))+{\rm deg}(g_v(x))}$.

(2) Since $\{vh_{1-v}^*(x), (1-v)h^*_v(x)\}$ is the generating set in standard form for $C^\perp$,
according to the proof of Corollary \ref{principalideal} we have that $C^\perp=\langle v h_{1-v}^*(x)+(1-v)h_v^*(x)\rangle$.

(3)Similar to the proof of Theorem \ref{grayimage}.
\end{proof}
\begin{Theorem}
Let $\vartheta=1-2v \,\,\,\text{or}\,\,\, -1+2v$ and let
$C$ be a $\vartheta$-constacyclic code of length $n$ over $R$ with a generating set
in standard form $\{v g_{1-v}(x),(1-v)g_v(x)\}$. Then

(1) $\phi_{\vartheta}(C^\perp)=[h^*_{1-v}(x)h^*_v(x)]$;

(2) $\phi_{\vartheta}(C^\perp)=(\phi_{\vartheta}(C))^\perp$.
\end{Theorem}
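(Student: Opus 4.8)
The plan is to prove (1) by the same size-comparison argument used for Corollary~\ref{grayimage-2}, and then to obtain (2) either directly, by checking that $\phi_{\vartheta}$ carries orthogonal pairs of $R^n$ to orthogonal pairs of $F_p^{2n}$ and counting, or as a corollary of (1) via the duality theory of cyclic codes over $F_p$.

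For part (1): by part~(3) of the preceding Theorem we already have $\phi_{\vartheta}(C^\perp)\subseteq\langle h_{1-v}^*(x)h_v^*(x)\rangle$, so it suffices to promote this inclusion to an equality. Writing $\vartheta=\lambda+v\mu$, the choice $\vartheta=1-2v$ or $-1+2v$ forces $(x^n-(\lambda+\mu))(x^n-\lambda)=x^{2n}-1$; combined with $g_{1-v}(x)h_{1-v}(x)=x^n-(\lambda+\mu)$ and $g_v(x)h_v(x)=x^n-\lambda$ this gives $h_{1-v}(x)h_v(x)\mid x^{2n}-1$, hence also $h_{1-v}^*(x)h_v^*(x)\mid x^{2n}-1$ (here $h_{1-v}(0)h_v(0)\neq 0$ because $\lambda\neq 0$ and $\lambda+\mu\neq 0$). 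Thus $\langle h_{1-v}^*(x)h_v^*(x)\rangle=[h_{1-v}^*(x)h_v^*(x)]$ is a genuine cyclic code with generator polynomial $h_{1-v}^*(x)h_v^*(x)$. Now I would compare cardinalities: since $\mu=\mp 2\neq 0$ the map $\phi_{\vartheta}$ is a bijection (as in the polynomial case), so $|\phi_{\vartheta}(C^\perp)|=|C^\perp|=p^{\deg g_{1-v}(x)+\deg g_v(x)}$ by part~(1) of the preceding Theorem, while $\deg h_{1-v}^*(x)+\deg h_v^*(x)=\deg h_{1-v}(x)+\deg h_v(x)=(n-\deg g_{1-v}(x))+(n-\deg g_v(x))$ gives $|[h_{1-v}^*(x)h_v^*(x)]|=p^{2n-\deg h_{1-v}(x)-\deg h_v(x)}=p^{\deg g_{1-v}(x)+\deg g_v(x)}$. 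Equal cardinalities together with the inclusion yield $\phi_{\vartheta}(C^\perp)=[h_{1-v}^*(x)h_v^*(x)]$.

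For part (2): I would first verify that $\phi_{\vartheta}$ respects orthogonality. For $c=(r_i+vq_i)$ and $c'=(r_i'+vq_i')$ in $R^n$, using $v^2=v$ one gets $c\cdot c'=\sum_i r_ir_i'+v\sum_i(r_iq_i'+q_ir_i'+q_iq_i')$, so $c\cdot c'=0$ precisely when $\sum_i r_ir_i'=0$ and $\sum_i(r_iq_i'+q_ir_i'+q_iq_i')=0$ in $F_p$; on the other hand, expanding the explicit formula for $\phi_{\vartheta}$ recorded in the proof of Lemma~\ref{lem-2} gives $\phi_{\vartheta}(c)\cdot\phi_{\vartheta}(c')=2\bigl(2\sum_i r_ir_i'+\sum_i(r_iq_i'+q_ir_i'+q_iq_i')\bigr)$, which vanishes whenever $c\cdot c'=0$. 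Hence, for $u\in C^\perp$, the vector $\phi_{\vartheta}(u)$ is orthogonal to $\phi_{\vartheta}(w)$ for every $w\in C$, i.e. to every element of $\phi_{\vartheta}(C)$, so $\phi_{\vartheta}(u)\in(\phi_{\vartheta}(C))^\perp$; thus $\phi_{\vartheta}(C^\perp)\subseteq(\phi_{\vartheta}(C))^\perp$. A cardinality count then closes the argument: since $R$ is Frobenius and $\phi_{\vartheta}$ is bijective, $|\phi_{\vartheta}(C^\perp)|=|C^\perp|=|R|^n/|C|=p^{2n}/|\phi_{\vartheta}(C)|=|(\phi_{\vartheta}(C))^\perp|$. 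Alternatively, (2) follows from (1) once one notes that $\phi_{\vartheta}(C)=[g_{1-v}(x)g_v(x)]$ by Corollary~\ref{grayimage-2} and that the Euclidean dual of a cyclic code $[g(x)]$ of length $2n$ over $F_p$ with $g(x)k(x)=x^{2n}-1$ is $[k^*(x)]$; here $k(x)=(x^{2n}-1)/(g_{1-v}(x)g_v(x))=h_{1-v}(x)h_v(x)$ and $k^*(x)=h_{1-v}^*(x)h_v^*(x)$ since the normalized reciprocal of a product is the product of the normalized reciprocals.

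The only genuinely delicate points I anticipate are bookkeeping ones: keeping the constant factor $2$ in the inner-product identity in view — it is invertible exactly because $p$ is odd, the phenomenon flagged in the Remark — and recording that $\mu=\mp 2\neq 0$ so that $\phi_{\vartheta}$ is a bijection on $R^n$. Given these observations, both parts reduce to routine manipulations with the machinery already assembled (the standard form $\{vg_{1-v}(x),(1-v)g_v(x)\}$, the factorizations $g_{1-v}(x)h_{1-v}(x)=x^n-(\lambda+\mu)$ and $g_v(x)h_v(x)=x^n-\lambda$, and the sizes supplied by Proposition~\ref{equation} and the preceding Theorem).
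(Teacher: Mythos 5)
Your proposal is correct. Part (1) is exactly the paper's argument: the paper simply says ``according to the proof of Corollary~\ref{grayimage-2}'', which is precisely the combination you spell out — the inclusion $\phi_{\vartheta}(C^\perp)\subseteq\langle h^*_{1-v}(x)h^*_v(x)\rangle$ from part (3) of the preceding theorem, the observation that $h^*_{1-v}(x)h^*_v(x)$ divides $x^{2n}-1$ so the right-hand side is $[h^*_{1-v}(x)h^*_v(x)]$ of size $p^{2n-\deg h_{1-v}-\deg h_v}=p^{\deg g_{1-v}+\deg g_v}$, and the cardinality match with $|\phi_{\vartheta}(C^\perp)|=|C^\perp|$. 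For part (2) the paper takes what you list as the alternative route: it invokes $\phi_{\vartheta}(C)=[g_{1-v}(x)g_v(x)]$ from Corollary~\ref{grayimage-2}, part (1) of the present theorem, and the standard fact that $[g_{1-v}(x)g_v(x)]^\perp=[h^*_{1-v}(x)h^*_v(x)]$ since $(x^{2n}-1)/(g_{1-v}(x)g_v(x))=h_{1-v}(x)h_v(x)$. Your primary route for (2) — verifying directly that $\phi_{\vartheta}(c)\cdot\phi_{\vartheta}(c')=2\bigl(2\sum_i r_ir_i'+\sum_i(r_iq_i'+q_ir_i'+q_iq_i')\bigr)$ vanishes whenever $c\cdot c'=0$, then counting via the Frobenius property — is a genuinely different and more self-contained argument: it does not pass through the cyclic-code duality formula over $F_p$, and it makes visible exactly where the hypothesis that $p$ is odd enters (the factor $2$), which the paper's route leaves implicit. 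Both arguments are sound; the computation in your inner-product identity checks out for both $\vartheta=1-2v$ and $\vartheta=-1+2v$.
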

\begin{proof}
(1) According to the proof of  Corollary \ref{grayimage-2}, we can obtain the result.

(2) Note that the facts that
$$\phi_{\vartheta}(C)=[g_{1-v}(x)g_v(x)]
~~\hbox{and}~~ \phi_{\vartheta}(C^\perp)=[ h^*_{1-v}(x)h^*_v(x)],$$ we have
\begin{eqnarray*}
\phi_{\vartheta}(C)^\perp & = & [g_{1-v}(x)g_v(x)] ^\perp\\
                       & = & [ h^*_{1-v}(x)h^*_v(x)] \\
                       & = & \phi_{\vartheta}(C^\perp),
\end{eqnarray*}
which is the required result.
\end{proof}

\begin{Example}
In $F_3[x]$,
$$x^{10}+1=(x^2+1)(x^4+x^3-x+1)(x^4-x^3+x+1);$$
$$x^{10}-1=(x-1)(x+1)(x^4+x^3+x^2+x+1)(x^4-x^3+x^2-x+1).$$
Let $C$ be the $(-1+2v)$-constacyclic code of length $10$ over $F_3+vF_3$ with
generating polynomial
\begin{eqnarray*}
g(x) & = & v(x^4+x^3-x+1)+(1-v)(x^4-x^3+x^2-x+1)\\
& = & x^4+(2v-1)x^3+(1-v)x^2-x+1.
\end{eqnarray*}
The Gray image $\phi_{\vartheta}(C)$ is a $[20, 12, 4]$ cyclic code over $F_3$ with generator polynomial
$(x^4+x^3-x+1)(x^4-x^3+x^2-x+1)$.
\end{Example}

\noindent{\bf Acknowledgement}
This work was supported by NSFC, Grant No. 11171370.


\end{document}